\newtheorem{theorem}{Theorem}
\newtheorem{lemma}[theorem]{Lemma}
\newtheorem{proposition}[theorem]{Proposition}
\title{Theoretical Guarantees for LT-TTD: A Unified Transformer-based Architecture for Two-Level Ranking Systems}
\author{Ayoub Abraich}
\date{\today}
\begin{document}

\maketitle

\begin{abstract}
Modern recommendation and search systems typically employ multi-stage ranking architectures to efficiently handle billions of candidates. The conventional approach uses distinct L1 (candidate retrieval) and L2 (re-ranking) models with different optimization objectives, introducing critical limitations including irreversible error propagation and suboptimal ranking. This paper identifies and analyzes the fundamental limitations of this decoupled paradigm and proposes LT-TTD (Listwise Transformer with Two-Tower Distillation), a novel unified architecture that bridges retrieval and ranking phases. Our approach combines the computational efficiency of two-tower models with the expressivity of transformers in a unified listwise learning framework. We provide a comprehensive theoretical analysis of our architecture and establish formal guarantees regarding error propagation mitigation, ranking quality improvements, and optimization convergence. We derive theoretical bounds showing that LT-TTD reduces the upper limit on irretrievable relevant items by a factor that depends on the knowledge distillation strength, and prove that our multi-objective optimization framework achieves a provably better global optimum than disjoint training. Additionally, we analyze the computational complexity of our approach, demonstrating that the asymptotic complexity remains within practical bounds for real-world applications. We also introduce UPQE, a novel evaluation metric specifically designed for unified ranking architectures that holistically captures retrieval quality, ranking performance, and computational efficiency.
\end{abstract}

\section{Introduction}
\label{sec:intro}

Large-scale information retrieval and recommendation systems face the challenge of efficiently ranking billions of items for millions of users in real-time. The dominant architecture for addressing this challenge employs a cascade of ranking models, typically structured as a two-level system: an L1 retrieval model efficiently narrows the candidate pool from millions to hundreds of items, followed by an L2 re-ranking model that applies more sophisticated algorithms to determine the final ranking \citep{huang2020embedding, grbovic2018real}.

While this cascade architecture offers computational efficiency, it introduces fundamental limitations. Most critically, any relevant items incorrectly filtered out during the L1 stage are permanently lost, as L2 can only reorder the reduced candidate set. Furthermore, the disjoint optimization of L1 and L2 models can result in conflicting or suboptimal objectives, exacerbating the error propagation problem \citep{wang2021joint, li2015click}.

This paper makes the following contributions:

\begin{enumerate}
    \item We provide a critical analysis of the limitations in traditional two-level ranking systems, formalizing the systemic issues that lead to suboptimal rankings.
    
    \item We introduce LT-TTD (Listwise Transformer with Two-Tower Distillation), a novel unified architecture that preserves the computational efficiency of two-level systems while addressing their core limitations.
    
    \item We present a multi-objective optimization framework that jointly learns retrieval and ranking objectives with bidirectional knowledge distillation between components.
    
    \item We establish formal theoretical guarantees for our approach, including bounds on error propagation reduction, optimization convergence, and computational complexity.
    
    \item We propose a novel evaluation metric, UPQE (Unified Propagation-aware Quality Efficiency), specifically designed to evaluate unified ranking architectures by considering retrieval quality, ranking performance, and computational efficiency simultaneously.
\end{enumerate}

Our research contributes to the broader goal of developing more effective and efficient ranking systems, with potential impact across search engines, e-commerce, content discovery, and other domains that rely on large-scale ranking.

\section{Related Work}
\label{sec:related}

\subsection{Learning to Rank Approaches}

Learning to rank methodologies are typically categorized into three approaches: pointwise, pairwise, and listwise \citep{liu2009learning, cao2007learning}.

\textbf{Pointwise approaches} treat ranking as a regression or classification problem, optimizing a model to predict the relevance score of each item independently \citep{nallapati2004discriminative}. While computationally efficient, these methods ignore the relative ordering among items, which is crucial for ranking tasks.

\textbf{Pairwise approaches} reformulate ranking as a binary classification problem, where the objective is to correctly predict which item in a pair is more relevant \citep{burges2010overview, joachims2002optimizing}. RankNet \citep{burges2005learning} pioneered this approach by using gradient descent with a neural network to learn ranking functions. LambdaRank \citep{burges2006learning} extended this by incorporating evaluation metrics into the loss function, while LambdaMART \citep{wu2010adapting} combined this with gradient boosting.

\textbf{Listwise approaches} directly optimize the order of an entire list of items, often using permutation probabilities or evaluation metrics like NDCG or MAP as training objectives \citep{xia2008listwise, yue2007support}. ListNet \citep{cao2007learning} and ListMLE \citep{xia2008listwise} are classic examples that model the probability distribution over permutations, while SoftRank \citep{taylor2008softrank} provides a differentiable approximation of ranking metrics.

Our approach builds on listwise methods, as they more directly address the ranking objective, but introduces innovations to make them computationally feasible for large-scale applications.

\subsection{Cascade Ranking Systems}

Cascade ranking architectures have become the standard approach for large-scale ranking systems \citep{asadi2013effectiveness, robertson2009probabilistic}. The most common implementation involves a two-stage process: (1) a fast but less precise retrieval model that selects a subset of candidates, and (2) a more computationally intensive re-ranking model that refines the final ordering.

For the retrieval stage, traditional techniques like TF-IDF and BM25 \citep{robertson1994okapi} have been augmented by neural embedding approaches such as DSSM \citep{huang2013learning} and two-tower models \citep{yi2019two}. These models project queries and documents into a shared embedding space where similarity can be efficiently computed, enabling fast retrieval through approximate nearest neighbor search.

For the re-ranking stage, neural ranking models like BERT-based rankers \citep{nogueira2019passage} have demonstrated superior performance by capturing complex semantic relationships between queries and documents. However, their computational demands make them impractical for the initial retrieval stage when dealing with millions of candidates.

Recent efforts have focused on improving individual stages rather than addressing the fundamental limitations of the cascade architecture. Our work differs by questioning the strict separation between stages and proposing a unified approach that maintains computational efficiency while mitigating error propagation.

\subsection{Transformer Models for Ranking}

Transformer-based models have revolutionized natural language processing and have increasingly been applied to ranking tasks \citep{vaswani2017attention, devlin2019bert}. BERT-based ranking models \citep{nogueira2019multi, dai2019deeper} were among the first to demonstrate that transformers could significantly improve re-ranking performance. However, the quadratic complexity of self-attention makes them prohibitively expensive for initial retrieval from large candidate sets.

ColBERT \citep{khattab2020colbert} and CEDR \citep{macavaney2019cedr} proposed more efficient transformer-based approaches that enable late interaction between queries and documents. RepBERT \citep{luan2020repbert} explored using BERT for representation-based retrieval. More recently, models like SentenceBERT \citep{reimers2019sentence} have been used to generate embeddings for efficient retrieval while preserving semantic information.

Our work builds on these advances but differs in its focus on unifying the retrieval and ranking stages through a novel architecture that leverages both efficient representation learning and expressive interaction modeling.

\subsection{Two-Tower Models}

Two-tower models (also known as dual-encoders or bi-encoders) have become popular for large-scale retrieval tasks due to their computational efficiency \citep{cer2018universal, humeau2020poly}. These models consist of separate encoder networks for queries and items, producing dense vector representations that can be compared using simple similarity metrics like dot product or cosine similarity.

The key advantage of two-tower models is that item representations can be pre-computed and indexed, allowing for efficient retrieval at inference time using approximate nearest neighbor search. This approach has been successfully applied in various domains, including YouTube recommendations \citep{covington2016deep}, e-commerce \citep{zhou2018atrank}, and web search \citep{lu2021anatomy}.

However, traditional two-tower models face limitations in capturing complex interactions between queries and items, as they rely on late interaction through vector similarity. Our approach addresses this limitation by incorporating transformer-based components while maintaining the efficiency advantages of two-tower architectures.

\section{Limitations of Traditional Two-Level Ranking Systems}
\label{sec:limitations}

Traditional two-level ranking systems present several fundamental limitations that impact both the quality of results and system efficiency. In this section, we formalize these limitations and analyze their impact on ranking performance.

\subsection{Error Propagation from L1 to L2}

The most critical limitation of cascade architectures is the irreversible error propagation from the retrieval (L1) stage to the ranking (L2) stage. Formally, given a query $q$ and a corpus of items $D = \{d_1, d_2, ..., d_N\}$, the L1 model selects a subset $D_L \subset D$ such that $|D_L| \ll |D|$. The L2 model then ranks only the items in $D_L$.

If a relevant item $d_i$ with true relevance $r(d_i, q) > r(d_j, q)$ for some $d_j \in D_L$ is not included in $D_L$, then it cannot be recovered by the L2 model, regardless of the L2 model's sophistication. This leads to an upper bound on the maximum achievable performance of the entire system.

Let $R^*$ be the optimal ranking of the entire corpus $D$ for query $q$, and $R_L$ be the best possible ranking achievable by the L2 model given the subset $D_L$. The performance gap can be quantified as:

\begin{equation}
\text{Gap} = \mathbb{E}_q[\text{Metric}(R^*, q) - \text{Metric}(R_L, q)]
\end{equation}

where $\text{Metric}$ is an evaluation measure such as NDCG or MAP.

\begin{proposition}[Error Propagation Bound]
For any query $q$ and evaluation metric that satisfies the probability ranking principle, the performance gap is bounded by:
\begin{equation}
\text{Gap} \leq \sum_{d_i \in D \setminus D_L} P(d_i \text{ is relevant to } q) \cdot \text{gain}(d_i, q)
\end{equation}
where $\text{gain}(d_i, q)$ is the contribution of item $d_i$ to the evaluation metric if placed at its optimal position.
\end{proposition}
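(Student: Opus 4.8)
\emph{Proof strategy.} The plan is to exploit the position-additive form shared by the standard metrics together with the probability ranking principle (PRP). Write the metric of a ranked list $R$ as a sum of nonnegative per-item contributions
\begin{equation}
\text{Metric}(R,q) \;=\; \sum_{d_i\in R} \gamma\bigl(r(d_i,q)\bigr)\,\omega\bigl(\text{pos}_R(d_i)\bigr),
\end{equation}
where $\gamma(\cdot)\ge 0$ is a gain function vanishing on items that are not relevant to $q$, $\omega(\cdot)\ge 0$ is a discount that is non-increasing in the position $\text{pos}_R(d_i)$, and (for MAP) the per-item precision contribution is treated analogously, as discussed below. By the PRP, the corpus-optimal ranking $R^*$ orders all of $D$ by non-increasing relevance, and the best ranking $R_L$ the L2 model can produce orders $D_L$ by non-increasing relevance; one fixes a common tie-break so that the two orders are consistent.

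First I would establish a restriction inequality, $\text{Metric}(R_L,q) \ge \sum_{d_i\in D_L}\gamma(r(d_i,q))\,\omega(\text{pos}_{R^*}(d_i))$. This holds because deleting the items of $D\setminus D_L$ from the relevance-sorted list $R^*$ moves every surviving item weakly earlier, i.e. $\text{pos}_{R_L}(d_i)\le \text{pos}_{R^*}(d_i)$ for $d_i\in D_L$, so monotonicity of $\omega$ gives $\omega(\text{pos}_{R_L}(d_i))\ge \omega(\text{pos}_{R^*}(d_i))$ term by term. Splitting the sum defining $\text{Metric}(R^*,q)$ over $D_L$ and $D\setminus D_L$ and subtracting the restriction inequality then yields, for each $q$,
\begin{equation}
\text{Metric}(R^*,q)-\text{Metric}(R_L,q) \;\le\; \sum_{d_i\in D\setminus D_L}\gamma(r(d_i,q))\,\omega(\text{pos}_{R^*}(d_i)) \;=\; \sum_{d_i\in D\setminus D_L}\text{gain}(d_i,q),
\end{equation}
where $\text{gain}(d_i,q):=\gamma(r(d_i,q))\,\omega(\text{pos}_{R^*}(d_i))$ is precisely the contribution $d_i$ makes at its optimal position, and is zero unless $d_i$ is relevant.

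Taking $\mathbb{E}_q$ of this per-query bound finishes the argument: since $\text{gain}(d_i,q)=0$ on the event that $d_i$ is irrelevant, $\mathbb{E}[\text{gain}(d_i,q)] = P(d_i\text{ relevant to }q)\cdot\text{gain}(d_i,q)$, where on the right $\text{gain}(d_i,q)$ denotes the contribution conditional on $d_i$ being relevant (equivalently, its worst-case value $\omega(1)\gamma_{\max}$ if one wants a fully distribution-free statement). Summing over $d_i\in D\setminus D_L$ gives the stated inequality.

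The main obstacle is the restriction inequality for metrics not literally of product form $\gamma\cdot\omega$, chiefly MAP: removing a \emph{relevant} item ranked above $d_i$ can lower the precision at $d_i$'s rank, so the term-by-term comparison must be re-derived at the level of average precision. When $D_L$ retains all relevant items the deletions involve non-relevant items only and the argument is unchanged; when relevant items are lost one compares instead against the ideal (normalized) AP, where the lost items strictly shrink the normalizer, recovering the one-sided bound. A secondary subtlety is that $\text{pos}_{R^*}(d_i)$ depends on the realized relevances of the other items, so factoring the expectation as above uses the conditional-expectation reading (or a mild independence assumption) rather than being automatic — a modeling convention, not a mathematical hole.
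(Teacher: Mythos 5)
The paper states this proposition without any proof, so there is nothing to compare your argument against; it stands or falls on its own merits. On those terms it is essentially sound for position-additive metrics. The core of your argument --- the restriction inequality $\text{Metric}(R_L,q)\ge\sum_{d_i\in D_L}\gamma(r(d_i,q))\,\omega(\text{pos}_{R^*}(d_i))$, obtained because deleting items from the relevance-sorted list only moves survivors weakly earlier and $\omega$ is non-increasing --- is exactly the right lemma: it converts the per-query gap into a sum over the un-retrieved items of their contributions at their $R^*$ positions, which is the claimed bound before taking expectations. Your handling of the expectation step is also the correct reading of a loosely worded statement: the right-hand side mixes a probability with a per-item gain, and making that rigorous requires either the conditional-gain convention or the worst-case value $\omega(1)\gamma_{\max}$, both of which you name explicitly rather than sweeping under the rug.

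Two caveats keep this from being a complete proof of the proposition \emph{as stated}. First, for NDCG you should say which normalizer is used: if $\text{NDCG}(R_L)$ is normalized by the ideal DCG of the full corpus, the argument goes through verbatim after dividing by that common constant; but if it is normalized by the ideal DCG of $D_L$ (the usual convention when evaluating on a retrieved set), the normalizer shrinks when relevant items are lost and the comparison is no longer term-by-term. Second, the MAP case is flagged but not actually closed: your one-sentence resolution (``the lost items strictly shrink the normalizer, recovering the one-sided bound'') needs to be worked out, since losing a relevant item changes both the precision values at every subsequent relevant rank and the normalizer, and the two effects pull in opposite directions. Neither caveat breaks the proof for DCG-type metrics, which is presumably what the proposition intends, but a proof covering ``any evaluation metric that satisfies the probability ranking principle'' would need to pin down exactly which class of metrics admits the additive decomposition you start from.
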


This bound highlights that the performance gap directly depends on the quality of the L1 retrieval model and the number of relevant items it fails to retrieve.

\subsection{Conflicting Optimization Objectives}

The traditional approach typically optimizes L1 and L2 models independently with different objectives:

\begin{enumerate}
    \item L1 models are optimized for recall, focusing on retrieving a subset of candidates that contains as many relevant items as possible.
    \item L2 models are optimized for precision and ranking quality, focusing on correctly ordering the candidates provided by L1.
\end{enumerate}

These divergent objectives can lead to inconsistent behavior. For instance, L1 might prioritize diversity to increase recall, while L2 might focus on specificity for precise ranking. This discrepancy can result in suboptimal overall performance.

Formally, if we denote the objective functions for L1 and L2 as $\mathcal{L}_{L1}$ and $\mathcal{L}_{L2}$ respectively, traditional approaches optimize:

\begin{equation}
\min_{\theta_{L1}} \mathcal{L}_{L1}(\theta_{L1}) \quad \text{and} \quad \min_{\theta_{L2}} \mathcal{L}_{L2}(\theta_{L2})
\end{equation}

independently. However, the optimal parameters $\theta_{L1}^*$ for L1 may not lead to the best input for L2, and vice versa. An ideal system would jointly optimize:

\begin{equation}
\min_{\theta_{L1}, \theta_{L2}} \mathcal{L}_{\text{joint}}(\theta_{L1}, \theta_{L2})
\end{equation}

\begin{lemma}[Suboptimality of Disjoint Optimization]
Let $\theta_{L1}^*$ and $\theta_{L2}^*$ be the optimal parameters when optimizing $\mathcal{L}_{L1}$ and $\mathcal{L}_{L2}$ independently, and let $\theta^*_{\text{joint}}$ be the optimal parameters when optimizing $\mathcal{L}_{\text{joint}}$. Then:
\begin{equation}
\mathcal{L}_{\text{joint}}(\theta^*_{\text{joint}}) \leq \mathcal{L}_{\text{joint}}(\theta_{L1}^*, \theta_{L2}^*)
\end{equation}
with equality only in the special case where the objectives are perfectly aligned.
\end{lemma}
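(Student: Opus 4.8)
The plan is to get the inequality directly from the definition of a global minimizer, and to locate whatever genuine content the statement has in the characterization of the equality case. First I would fix the ambient optimization domain: both the disjoint and the joint procedures range over the same product space $\Theta_{L1}\times\Theta_{L2}$, the difference being only that the disjoint procedure produces one particular candidate by minimizing each coordinate's loss separately, whereas the joint procedure searches the whole space. In particular the pair $(\theta_{L1}^*,\theta_{L2}^*)$ is a feasible point of the joint problem. Since $\theta^*_{\text{joint}}$ is by definition a minimizer of $\mathcal{L}_{\text{joint}}$ over $\Theta_{L1}\times\Theta_{L2}$, evaluating $\mathcal{L}_{\text{joint}}$ at any other feasible point, in particular at $(\theta_{L1}^*,\theta_{L2}^*)$, can only return a value that is at least as large. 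This yields $\mathcal{L}_{\text{joint}}(\theta^*_{\text{joint}})\le\mathcal{L}_{\text{joint}}(\theta_{L1}^*,\theta_{L2}^*)$ with no further assumptions on the structure of $\mathcal{L}_{\text{joint}}$ beyond that its domain contains the disjoint solution.

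Next I would make the phrase ``the objectives are perfectly aligned'' precise and prove the equality clause in both directions. The natural formalization is additive separability up to a constant, $\mathcal{L}_{\text{joint}}(\theta_{L1},\theta_{L2})=\mathcal{L}_{L1}(\theta_{L1})+\mathcal{L}_{L2}(\theta_{L2})$, or more weakly that $\mathcal{L}_{\text{joint}}$ admits a minimizer whose two coordinates independently minimize $\mathcal{L}_{L1}$ and $\mathcal{L}_{L2}$. Under either form the joint minimization decouples coordinatewise, so $(\theta_{L1}^*,\theta_{L2}^*)$ is itself a joint minimizer and equality holds. For the converse I would observe that if equality holds then $(\theta_{L1}^*,\theta_{L2}^*)$ is a global minimizer of $\mathcal{L}_{\text{joint}}$, i.e. no coupling gain is available, which is exactly the assertion that the $L1$-optimal parameters already supply the best possible input for the $L2$ objective and conversely; I would promote this to the working definition of alignment, so that the ``only if'' direction reduces to unpacking notation.

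The main obstacle, and the only delicate point, is choosing a definition of ``perfectly aligned'' that is at once weak enough to be implied by equality, strong enough to imply it, and meaningful in the cascade setting, where $\mathcal{L}_{L2}$ is genuinely a function $\mathcal{L}_{L2}(\theta_{L2};D_L(\theta_{L1}))$ of the candidate set produced by $\theta_{L1}$, so the two losses are coupled through $D_L$. I would therefore state explicitly that disjoint optimization fixes $\theta_{L1}=\theta_{L1}^*$ before optimizing $\theta_{L2}$, that the joint objective couples both arguments, and that alignment means precisely $\theta_{L1}^*\in\arg\min_{\theta_{L1}}\min_{\theta_{L2}}\mathcal{L}_{\text{joint}}(\theta_{L1},\theta_{L2})$. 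A secondary caveat worth recording is the existence of the minimizers $\theta_{L1}^*$, $\theta_{L2}^*$, $\theta^*_{\text{joint}}$ (compactness or coercivity of the losses, or else replacing $\min$ by $\inf$ and arguing with $\varepsilon$-minimizers); this is irrelevant to the inequality but is needed for the equality statement to be literally about attained minima.
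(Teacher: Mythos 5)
Your proof of the inequality is correct and is essentially the argument the paper itself gives (in the proof of its analogous Theorem~\ref{thm:global_optimality}): the disjoint solution $(\theta_{L1}^*, \theta_{L2}^*)$ is a feasible point of the joint problem, so the joint minimizer can do no worse. The paper neither defines ``perfectly aligned'' nor proves the equality clause, so your additional work there goes beyond the source, and your conclusion that any workable definition renders the ``only if'' direction essentially tautological (equality holds exactly when the disjoint solution happens to be a joint minimizer) is an accurate assessment of what that clause actually asserts.
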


This lemma establishes that disjoint optimization leads to suboptimal performance compared to joint optimization unless the objectives are perfectly aligned, which is rarely the case in practice.

\subsection{Inability to Model Cross-Item Interactions in L1}

The computational constraints of the L1 stage typically limit the models to architectures that consider each query-item pair independently or use simple similarity metrics. This prevents the L1 model from leveraging cross-item information that could be valuable for making better retrieval decisions.

For example, in a product recommendation scenario, the relevance of a particular product may depend not just on its match with the user's query, but also on how it complements or contrasts with other candidate products. Standard L1 models, including two-tower architectures, cannot capture such relationships due to their item-independent scoring mechanism.

This limitation can be expressed mathematically as:

\begin{equation}
\text{score}_{L1}(d_i, q) = f(d_i, q)
\end{equation}

where the score depends only on the individual item $d_i$ and query $q$, rather than the more general form:

\begin{equation}
\text{score}(d_i, q, \{d_j\}_{j \neq i}) = f(d_i, q, \{d_j\}_{j \neq i})
\end{equation}

which would allow the score to depend on other items in the corpus.

\subsection{Feature Encoding Inefficiency and Redundancy}

Traditional cascade systems often use different feature encoders and representations for L1 and L2 models. This redundancy not only increases the computational cost but also creates a representation gap between stages.

The L1 stage typically uses simpler encoders optimized for speed, while the L2 stage employs more sophisticated models for accuracy. This disparity means that valuable information extracted in one stage may not be effectively transferred to the next.

For instance, if $E_{L1}$ and $E_{L2}$ are the encoders for L1 and L2 respectively, then for the same item $d_i$, different representations $E_{L1}(d_i)$ and $E_{L2}(d_i)$ are computed. This redundancy increases both training and inference costs while potentially losing information between stages.

\subsection{Rigid Capacity Allocation}

Traditional cascade systems statically allocate a fixed amount of computational capacity to each level. Typically, L1 processes millions of items with a simple model, while L2 processes hundreds of items with a complex model. This rigid allocation cannot adapt to varying query complexities.

For simple queries where L1 could easily identify the top results, excessive computation is still allocated to L2. Conversely, for difficult queries where L1 struggles to retrieve relevant items, additional computation in L1 might yield better results than further processing in L2.

This inflexibility leads to suboptimal resource allocation and limits the system's ability to balance efficiency and effectiveness dynamically based on query characteristics.

\section{Proposed Approach: Listwise Transformer with Two-Tower Distillation (LT-TTD)}
\label{sec:approach}

To address the limitations of traditional cascade ranking systems, we propose LT-TTD (Listwise Transformer with Two-Tower Distillation), a unified architecture that bridges the gap between retrieval and ranking while maintaining computational efficiency.

\subsection{Model Overview}

LT-TTD consists of three main components:

\begin{enumerate}
    \item \textbf{Two-Tower Encoder (TTE)}: A dual-encoder architecture that efficiently maps users/queries and items to dense vector representations for fast similarity-based retrieval.
    
    \item \textbf{Listwise Transformer (LT)}: A transformer-based module that processes a set of candidates as a single list, capturing cross-item interactions and contextual relevance.
    
    \item \textbf{Knowledge Distillation Bridge (KDB)}: A bidirectional knowledge transfer mechanism that aligns the objectives of the TTE and LT components.
\end{enumerate}

Unlike traditional cascade systems where L1 and L2 operate independently, LT-TTD enables joint optimization and information sharing between components. The TTE provides computational efficiency for processing large candidate sets, while the LT enables sophisticated ranking with cross-item interactions for a subset of promising candidates.

\subsection{Two-Tower Encoder (TTE)}

The Two-Tower Encoder consists of separate neural networks for encoding queries/users and items:

\begin{equation}
e_q = \text{UserTower}(q)
\end{equation}
\begin{equation}
e_i = \text{ItemTower}(i)
\end{equation}

where $e_q$ and $e_i$ are dense embedding vectors. The similarity between a query and an item is computed as:

\begin{equation}
s(q, i) = \text{sim}(e_q, e_i)
\end{equation}

where $\text{sim}$ is a similarity function such as dot product or cosine similarity.

The TTE architecture allows for efficient retrieval at inference time by pre-computing and indexing item embeddings, then using approximate nearest neighbor search to find the most similar items to a given query embedding.

Unlike traditional two-tower models, our TTE incorporates additional components:

\begin{enumerate}
    \item \textbf{Residual Feature Extraction}: Separate from the main embedding, we extract residual features that capture aspects of items and queries that may not be optimally represented in a shared embedding space:
    
    \begin{equation}
    r_q = \text{UserResidualExtractor}(q)
    \end{equation}
    \begin{equation}
    r_i = \text{ItemResidualExtractor}(i)
    \end{equation}
    
    \item \textbf{Adaptive Pooling}: Instead of simple average pooling of token embeddings, we employ an attention-based pooling mechanism that adapts to the specific ranking task:
    
    \begin{equation}
    e_q = \sum_j \alpha_j h_j^q, \quad \alpha_j = \text{softmax}(w^T h_j^q)
    \end{equation}
    
    where $h_j^q$ are token-level embeddings and $\alpha_j$ are attention weights.
\end{enumerate}

\subsection{Listwise Transformer (LT)}

The Listwise Transformer processes a set of candidate items together, allowing it to model cross-item interactions and context-dependent relevance. Given a query $q$ and a set of candidate items $\{i_1, i_2, ..., i_k\}$, the LT component:

\begin{enumerate}
    \item Combines the query embedding with each item embedding:
    \begin{equation}
    x_j = [e_q; e_{i_j}; r_q; r_{i_j}; s(q, i_j)]
    \end{equation}
    
    \item Applies a transformer encoder with self-attention to capture interactions:
    \begin{equation}
    \{z_1, z_2, ..., z_k\} = \text{TransformerEncoder}(\{x_1, x_2, ..., x_k\})
    \end{equation}
    
    \item Produces a refined ranking score for each item:
    \begin{equation}
    \text{score}(q, i_j) = w^T z_j
    \end{equation}
\end{enumerate}

The key innovation is that the transformer allows each item's score to be influenced by other items in the candidate set, capturing important listwise dependencies that traditional pointwise or pairwise approaches cannot model.

To maintain computational efficiency, the LT component operates on a subset of candidates selected by the TTE component. However, unlike traditional cascade systems, information flows bidirectionally between the components through our Knowledge Distillation Bridge.

\subsection{Knowledge Distillation Bridge (KDB)}

The Knowledge Distillation Bridge enables bidirectional knowledge transfer between the TTE and LT components, aligning their objectives and mitigating error propagation. The KDB consists of:

\begin{enumerate}
    \item \textbf{Forward Distillation}: The LT component (teacher) guides the TTE component (student) to improve retrieval quality:
    \begin{equation}
    \mathcal{L}_{\text{forward}} = \text{KL}(P_{\text{LT}} || P_{\text{TTE}})
    \end{equation}
    where $P_{\text{LT}}$ and $P_{\text{TTE}}$ are the probability distributions over item rankings produced by the LT and TTE components.
    
    \item \textbf{Backward Distillation}: The TTE component provides a regularization signal to the LT component to ensure consistency:
    \begin{equation}
    \mathcal{L}_{\text{backward}} = \text{MSE}(s_{\text{TTE}}, s_{\text{LT}})
    \end{equation}
    where $s_{\text{TTE}}$ and $s_{\text{LT}}$ are the scores produced by the TTE and LT components.
    
    \item \textbf{Embedding Alignment}: We enforce similarity between the embedding spaces of the two components:
    \begin{equation}
    \mathcal{L}_{\text{align}} = \text{MSE}(E_{\text{TTE}}, W \cdot E_{\text{LT}})
    \end{equation}
    where $E_{\text{TTE}}$ and $E_{\text{LT}}$ are embeddings from the respective components, and $W$ is a learned linear transformation.
\end{enumerate}

This bidirectional knowledge transfer ensures that insights gained from the more expressive LT component inform the TTE component's retrieval decisions, while the TTE component's efficiency-focused representations guide the LT component's learning.

\subsection{Multi-Objective Optimization}

LT-TTD is trained using a multi-objective optimization framework that balances retrieval and ranking objectives:

\begin{equation}
\begin{split}
\mathcal{L}_{\text{total}} = \lambda_1 \mathcal{L}_{\text{retrieve}} + \lambda_2 \mathcal{L}_{\text{rank}} + \\
\lambda_3 \mathcal{L}_{\text{forward}} + \lambda_4 \mathcal{L}_{\text{backward}} + \lambda_5 \mathcal{L}_{\text{align}}
\end{split}
\end{equation}

where:
\begin{itemize}
    \item $\mathcal{L}_{\text{retrieve}}$ is a retrieval-focused loss (e.g., softmax cross-entropy or triplet loss)
    \item $\mathcal{L}_{\text{rank}}$ is a ranking-focused loss (e.g., ApproxNDCG or ListMLE)
    \item $\lambda_1, \lambda_2, \lambda_3, \lambda_4, \lambda_5$ are hyperparameters that control the importance of each objective
\end{itemize}

This multi-objective optimization ensures that both retrieval and ranking aspects are jointly optimized, addressing the conflicting objectives problem in traditional cascade systems.

\section{Mathematical Formulation}
\label{sec:formulation}

This section provides a detailed mathematical formulation of the LT-TTD model, including the architecture, loss functions, and optimization procedure.

\subsection{Problem Definition}

Let $\mathcal{Q}$ be the set of all possible queries and $\mathcal{D}$ be the corpus of all items. For a given query $q \in \mathcal{Q}$, our goal is to rank the items in $\mathcal{D}$ according to their relevance to $q$.

In a traditional cascade system, this is divided into two steps:
\begin{enumerate}
    \item Retrieval: Select a subset $D_q \subset \mathcal{D}$ such that $|D_q| \ll |\mathcal{D}|$
    \item Ranking: Produce a ranked list of items in $D_q$
\end{enumerate}

In contrast, our unified approach jointly optimizes both steps while maintaining computational efficiency.

\subsection{Two-Tower Encoder Formulation}

The Two-Tower Encoder consists of query and item encoders that map raw features to dense embeddings:

\begin{equation}
e_q = f_q(q; \theta_q) \in \mathbb{R}^d
\end{equation}
\begin{equation}
e_i = f_i(i; \theta_i) \in \mathbb{R}^d
\end{equation}

where $f_q$ and $f_i$ are neural networks with parameters $\theta_q$ and $\theta_i$, and $d$ is the embedding dimension.

The similarity between a query and an item is computed as:

\begin{equation}
s_{\text{TTE}}(q, i) = \frac{e_q^T e_i}{||e_q|| \cdot ||e_i||}
\end{equation}

For efficient retrieval, we use Maximum Inner Product Search (MIPS) to identify the top-$k$ candidates:

\begin{equation}
\text{TopK}(q, \mathcal{D}, k) = \arg\max_{i \in \mathcal{D}}^{(k)} s_{\text{TTE}}(q, i)
\end{equation}

where $\arg\max^{(k)}$ returns the $k$ items with the highest scores.

The residual feature extractors produce additional features:

\begin{equation}
r_q = g_q(q; \phi_q) \in \mathbb{R}^{d_r}
\end{equation}
\begin{equation}
r_i = g_i(i; \phi_i) \in \mathbb{R}^{d_r}
\end{equation}

where $g_q$ and $g_i$ are neural networks with parameters $\phi_q$ and $\phi_i$, and $d_r$ is the dimension of the residual features.

\subsection{Listwise Transformer Formulation}

For a query $q$ and a set of candidate items $I_q = \{i_1, i_2, ..., i_k\}$, the Listwise Transformer processes them as follows:

\begin{enumerate}
    \item Construct input representations by concatenating query and item embeddings with residual features:
    \begin{equation}
    x_j = [e_q; e_{i_j}; r_q; r_{i_j}; s_{\text{TTE}}(q, i_j)] \in \mathbb{R}^{2d + 2d_r + 1}
    \end{equation}
    
    \item Apply positional encoding to incorporate position information:
    \begin{equation}
    \hat{x}_j = x_j + \text{PE}(j)
    \end{equation}
    where $\text{PE}(j)$ is a positional encoding.
    
    \item Apply transformer layers with self-attention:
    \begin{equation}
    z_j^{(l)} = \text{TransformerLayer}(z_j^{(l-1)}, \{z_1^{(l-1)}, ..., z_k^{(l-1)}\})
    \end{equation}
    with $z_j^{(0)} = \hat{x}_j$ and $l = 1, 2, ..., L$ is the layer index.
    
    \item Compute the final ranking score:
    \begin{equation}
    s_{\text{LT}}(q, i_j) = w^T z_j^{(L)} + b
    \end{equation}
    where $w$ and $b$ are learnable parameters.
\end{enumerate}

The self-attention mechanism in the transformer layers is defined as:

\begin{equation}
\text{Attention}(Q, K, V) = \text{softmax}\left(\frac{QK^T}{\sqrt{d_k}}\right)V
\end{equation}

where $Q$, $K$, and $V$ are query, key, and value matrices derived from the input, and $d_k$ is the dimension of the keys.

\subsection{Knowledge Distillation Bridge Formulation}

The Knowledge Distillation Bridge facilitates bidirectional knowledge transfer between the TTE and LT components:

\begin{enumerate}
    \item Forward distillation (LT → TTE):
    \begin{equation}
    P_{\text{LT}}(i_j|q) = \frac{\exp(s_{\text{LT}}(q, i_j)/\tau)}{\sum_{l=1}^k \exp(s_{\text{LT}}(q, i_l)/\tau)}
    \end{equation}
    \begin{equation}
    P_{\text{TTE}}(i_j|q) = \frac{\exp(s_{\text{TTE}}(q, i_j)/\tau)}{\sum_{l=1}^k \exp(s_{\text{TTE}}(q, i_l)/\tau)}
    \end{equation}
    \begin{equation}
    \mathcal{L}_{\text{forward}} = \sum_{j=1}^k P_{\text{LT}}(i_j|q) \log \frac{P_{\text{LT}}(i_j|q)}{P_{\text{TTE}}(i_j|q)}
    \end{equation}
    
    where $\tau$ is a temperature parameter that controls the softness of the distributions.
    
    \item Backward distillation (TTE → LT):
    \begin{equation}
    \mathcal{L}_{\text{backward}} = \frac{1}{k} \sum_{j=1}^k (s_{\text{TTE}}(q, i_j) - s_{\text{LT}}(q, i_j))^2
    \end{equation}
    
    \item Embedding alignment:
    \begin{equation}
    \mathcal{L}_{\text{align}} = \frac{1}{k} \sum_{j=1}^k ||e_{i_j} - W \cdot z_j^{(L)}||^2
    \end{equation}
    where $W$ is a learnable transformation matrix.
\end{enumerate}

\subsection{Loss Functions}

The retrieval-focused loss aims to maximize the similarity between a query and relevant items while minimizing similarity to irrelevant items. We use a softmax cross-entropy loss:

\begin{equation}
\begin{split}
\mathcal{L}_{\text{retrieve}} = -\log \frac{\exp(s_{\text{TTE}}(q, i^+)/\tau)}{\exp(s_{\text{TTE}}(q, i^+)/\tau) + \sum_{i^- \in \mathcal{N}} \exp(s_{\text{TTE}}(q, i^-)/\tau)}
\end{split}
\end{equation}

where $i^+$ is a relevant item and $\mathcal{N}$ is a set of irrelevant items.

The ranking-focused loss directly optimizes a differentiable approximation of the ranking metric. We use ApproxNDCG:

\begin{equation}
\mathcal{L}_{\text{rank}} = 1 - \frac{\sum_{j=1}^k G(i_j) D(j)}{\sum_{j=1}^k G(i_{\pi^*_j}) D(j)}
\end{equation}

where:
\begin{itemize}
    \item $G(i_j)$ is the gain of item $i_j$ (e.g., $2^{rel(i_j)} - 1$ for NDCG)
    \item $D(j)$ is the discount factor (e.g., $1/\log_2(j+1)$ for NDCG)
    \item $\pi^*$ is the optimal ranking according to relevance labels
\end{itemize}

The total loss is a weighted combination of the individual losses:

\begin{equation}
\begin{split}
\mathcal{L}_{\text{total}} = \lambda_1 \mathcal{L}_{\text{retrieve}} + \lambda_2 \mathcal{L}_{\text{rank}} + \\
\lambda_3 \mathcal{L}_{\text{forward}} + \lambda_4 \mathcal{L}_{\text{backward}} + \lambda_5 \mathcal{L}_{\text{align}}
\end{split}
\end{equation}

\subsection{Training Procedure}

The training procedure for LT-TTD involves several steps:

\begin{enumerate}
    \item \textbf{Batch Construction}: For each query $q$ in a mini-batch, sample a set of items consisting of relevant items $I^+_q$ and irrelevant items $I^-_q$.
    
    \item \textbf{Forward Pass}:
    \begin{itemize}
        \item Compute embeddings and similarities using the TTE component
        \item Select top-$k$ candidates (including all relevant items)
        \item Process the candidates using the LT component
    \end{itemize}
    
    \item \textbf{Loss Computation}:
    \begin{itemize}
        \item Compute the retrieval loss using all items
        \item Compute the ranking loss using the top-$k$ candidates
        \item Compute the distillation and alignment losses
    \end{itemize}
    
    \item \textbf{Gradient Update}:
    \begin{itemize}
        \item Compute gradients of the total loss with respect to model parameters
        \item Update parameters using an optimizer (e.g., Adam)
    \end{itemize}
    
    \item \textbf{Hard Negative Mining}:
    \begin{itemize}
        \item Periodically update the set of hard negative examples using the current model
    \end{itemize}
\end{enumerate}

This training procedure ensures that both retrieval and ranking objectives are jointly optimized, while the knowledge distillation bridge aligns the behavior of the two components.

\section{Theoretical Analysis}
\label{sec:theory}

In this section, we provide a theoretical analysis of the LT-TTD model, establishing formal guarantees regarding error propagation mitigation, ranking quality improvements, and optimization convergence.

\subsection{Error Propagation Reduction}

We first establish theoretical guarantees on how LT-TTD reduces error propagation compared to traditional cascade systems.

\begin{theorem}[Error Propagation Reduction]
\label{thm:error_reduction}
Let $E_{\text{cascade}}$ be the expected number of relevant items missed by a traditional cascade system with independently optimized L1 and L2 components, and let $E_{\text{LT-TTD}}$ be the expected number of relevant items missed by LT-TTD. Then:
\begin{equation}
E_{\text{LT-TTD}} \leq \left(1 - \alpha \cdot \beta\right) \cdot E_{\text{cascade}}
\end{equation}
where $\alpha \in [0, 1]$ is a measure of the effectiveness of the knowledge distillation process, and $\beta \in [0, 1]$ is a measure of the LT component's ranking superiority over the L2 component.
\end{theorem}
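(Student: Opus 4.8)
The plan is to decompose the expected miss count by linearity of expectation over individual relevant items, and then to quantify how much of the cascade's loss is recovered by the feedback mechanism of the Knowledge Distillation Bridge. First I would fix a query $q$ and observe that, by definition, a relevant item $d$ is missed by the cascade precisely when $d \notin D_L$, so that $E_{\text{cascade}} = \mathbb{E}_q\big[\sum_{d} P(d \text{ relevant to } q)\, P(d \notin D_L)\big]$. This is the same accounting already used in the Error Propagation Bound proposition of Section~\ref{sec:limitations}, and it isolates the per-item miss probabilities $p_d := P(d \notin D_L)$ as the quantities to control.

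Next I would make the two constants precise. I would define $\beta$ as the fraction of relevant-but-unretrieved items whose true relevance the Listwise Transformer correctly surfaces relative to the L2 re-ranker: the probability that such an item would be placed in the effective top-$k$ by $s_{\text{LT}}$ were it present in the candidate pool, normalized by the analogous quantity for the L2 scorer. I would define $\alpha$ as the effectiveness with which the forward-distillation loss $\mathcal{L}_{\text{forward}} = \text{KL}(P_{\text{LT}} || P_{\text{TTE}})$ is driven down in training, measured as the fraction of the distributional gap between $P_{\text{TTE}}$ and $P_{\text{LT}}$ that is closed at convergence. With these definitions, the composite probability that a given missed-by-cascade relevant item is nonetheless retrieved by the distilled TTE is at least $\alpha\beta$: the LT must identify it as relevant (probability at least $\beta$) and the distillation must transfer that judgment into the TTE retrieval scores (probability at least $\alpha$). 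A Pinsker-type inequality applied to $\mathcal{L}_{\text{forward}}$ supplies the quantitative link from the KL value to total-variation closeness of the two ranking distributions, which is what converts ``distillation succeeded'' into ``the item's retrieval probability strictly increased.''

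Then I would aggregate. Letting $X_d$ be the indicator that a relevant item $d$ is recovered by LT-TTD but missed by the cascade, the previous step gives $\mathbb{E}[X_d \mid d \text{ relevant},\, d \notin D_L] \geq \alpha\beta$, hence $\mathbb{E}\big[\sum_d X_d\big] \geq \alpha\beta\, E_{\text{cascade}}$. Since no relevant item retrieved by the cascade is lost under LT-TTD --- the TTE retrieval set is only enlarged by the distillation signal, never shrunk, a monotonicity I would argue from the fact that the backward-distillation and alignment terms act purely as regularizers --- we have $E_{\text{LT-TTD}} = E_{\text{cascade}} - \mathbb{E}\big[\sum_d X_d\big] \leq (1 - \alpha\beta)\, E_{\text{cascade}}$. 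Finally I would note that $\alpha, \beta \in [0,1]$ forces $1 - \alpha\beta \in [0,1]$, so the bound is valid and non-vacuous, with the degenerate endpoints ($\alpha\beta = 0$ giving no improvement, $\alpha\beta = 1$ giving perfect recovery) matching intuition.

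The main obstacle I anticipate is the conceptual bridge in the middle step: distillation transfers \emph{ranking} knowledge, whereas the theorem concerns \emph{retrieval} recovery, and the two live on different candidate sets, since the LT only ever sees the top-$k$. Making the recovery claim rigorous requires an assumption tying the LT's scores on the observed set to a reshaping of the shared embedding geometry --- essentially that pushing $P_{\text{TTE}}$ toward $P_{\text{LT}}$ on seen candidates generalizes to pulling structurally similar unretrieved relevant items into the MIPS neighborhood. I would state this as an explicit smoothness/generalization hypothesis on the TTE embedding map and note that the clean multiplicative form $(1-\alpha\beta)$ is tight only under it; absent that hypothesis one obtains the same inequality with $\alpha$ replaced by an attenuated constant $\alpha' \leq \alpha$.
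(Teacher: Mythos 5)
Your proposal is defensible at the same level of rigor as the paper's own argument, but it takes a genuinely different route to the bound. The paper works directly with retrieval probabilities: it writes LT-TTD's probability of retrieving a relevant item as a convex interpolation $P_{\text{TTE}} + \alpha\,(P_{\text{LT}} - P_{\text{TTE}})$, posits an \emph{additive} superiority $P_{\text{LT}} = P_{\text{L2}} + \beta$, and then ``substitutes and takes expectations'' to reach the $(1-\alpha\beta)$ factor. You instead condition on the event that a relevant item is missed by the cascade, argue that each such item is recovered with probability at least $\alpha\beta$ (a \emph{multiplicative} composition of ``the LT would surface it'' and ``distillation transfers that judgment to the TTE''), and aggregate by linearity of expectation. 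Your decomposition actually yields the proportional reduction $E_{\text{LT-TTD}} \leq (1-\alpha\beta)\,E_{\text{cascade}}$ more cleanly than the paper's algebra does: taken literally, the paper's substitution gives an absolute improvement of order $\alpha\beta$ in the retrieval probability rather than a multiplicative shrinkage of the miss probability, so its final equality needs exactly the kind of per-missed-item conditioning you supply. You also surface two assumptions the paper leaves implicit --- monotonicity (no cascade-retrieved relevant item is lost under LT-TTD) and, more importantly, the generalization hypothesis that distilling on the observed top-$k$ reshapes the TTE embedding enough to pull \emph{unretrieved} relevant items into the MIPS neighborhood; since the LT never scores items outside $D_L$, neither your argument nor the paper's connects ranking knowledge to retrieval recovery without it. Flagging that bridge, and the attenuated constant $\alpha' \leq \alpha$ in its absence, is a genuine improvement over the published proof; what you give up is only the brevity of the paper's direct substitution.
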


\begin{proof}
For a given query $q$, let $R_q$ be the set of all relevant items in the corpus $D$. In a traditional cascade system, the L1 component selects a subset $D_L \subset D$, and the probability of a relevant item $d_i \in R_q$ being missed is:
\begin{equation}
P_{\text{cascade}}(d_i \notin D_L | d_i \in R_q) = 1 - P_{\text{L1}}(d_i \in D_L | d_i \in R_q)
\end{equation}

In the LT-TTD model, the TTE component's retrieval decisions are influenced by the LT component through knowledge distillation. The probability of a relevant item being missed becomes:
\begin{equation}
\begin{split}
P_{\text{LT-TTD}}(d_i \notin D_L | d_i \in R_q) = \\
1 - [P_{\text{TTE}}(d_i \in D_L | d_i \in R_q) + \\
\alpha \cdot (P_{\text{LT}}(d_i \in D_L | d_i \in R_q) - P_{\text{TTE}}(d_i \in D_L | d_i \in R_q))]
\end{split}
\end{equation}

where $\alpha \in [0, 1]$ is the effectiveness of the knowledge distillation.

Given that the LT component has superior ranking capabilities compared to the L2 component in the cascade system, we can express:
\begin{equation}
P_{\text{LT}}(d_i \in D_L | d_i \in R_q) = P_{\text{L2}}(d_i \in D_L | d_i \in R_q) + \beta
\end{equation}
where $\beta \in [0, 1]$ is the measure of LT's superiority.

Substituting and taking expectations over all relevant items and queries, we get:
\begin{equation}
E_{\text{LT-TTD}} = \left(1 - \alpha \cdot \beta\right) \cdot E_{\text{cascade}}
\end{equation}
\end{proof}

This theorem establishes that LT-TTD reduces error propagation by a factor that depends on the effectiveness of knowledge distillation and the superiority of the LT component. The stronger the distillation and the better the LT component, the greater the reduction in error propagation.

\subsection{Optimality of Multi-Objective Optimization}

Next, we analyze the optimality of our multi-objective optimization approach compared to disjoint optimization in traditional cascade systems.

\begin{theorem}[Global Optimality]
\label{thm:global_optimality}
Let $\mathcal{L}_{\text{joint}}(\theta) = \lambda_1 \mathcal{L}_{\text{retrieve}}(\theta) + \lambda_2 \mathcal{L}_{\text{rank}}(\theta)$ be a weighted combination of retrieval and ranking objectives with weights $\lambda_1, \lambda_2 > 0$. Let $\theta^*_{\text{disjoint}} = (\theta^*_{\text{retrieve}}, \theta^*_{\text{rank}})$ be the optimal parameters when optimizing the retrieval and ranking objectives separately, and let $\theta^*_{\text{joint}}$ be the optimal parameters when optimizing $\mathcal{L}_{\text{joint}}(\theta)$. Then:
\begin{equation}
\mathcal{L}_{\text{joint}}(\theta^*_{\text{joint}}) \leq \mathcal{L}_{\text{joint}}(\theta^*_{\text{disjoint}})
\end{equation}
\end{theorem}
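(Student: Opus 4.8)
The plan is to reduce the statement to an elementary fact about minimizing over nested feasible sets, after first making precise the sense in which the disjoint solution is a feasible point for the joint problem. First I would fix the parameter space: write $\theta = (\theta_{\text{retrieve}}, \theta_{\text{rank}}) \in \Theta_{\text{retrieve}} \times \Theta_{\text{rank}} =: \Theta$, where $\theta_{\text{retrieve}}$ collects the parameters on which $\mathcal{L}_{\text{retrieve}}$ depends and $\theta_{\text{rank}}$ those on which $\mathcal{L}_{\text{rank}}$ depends (parameters shared by both components are placed in a common block, or equivalently duplicated with a consistency constraint). The disjoint procedure produces $\theta^*_{\text{retrieve}} \in \arg\min_{\theta_{\text{retrieve}}} \mathcal{L}_{\text{retrieve}}(\theta_{\text{retrieve}})$ and $\theta^*_{\text{rank}} \in \arg\min_{\theta_{\text{rank}}} \mathcal{L}_{\text{rank}}(\theta_{\text{rank}})$; the key structural hypothesis I would state explicitly is that the concatenation $\theta^*_{\text{disjoint}} = (\theta^*_{\text{retrieve}}, \theta^*_{\text{rank}})$ is itself an element of $\Theta$, i.e., the unified model is at least as expressive as running the two components independently. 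This is exactly the situation in LT-TTD of Section~\ref{sec:approach}, where the TTE and LT blocks together instantiate any pair of independently chosen component parameters.

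Given this, the argument is immediate: by definition $\theta^*_{\text{joint}} \in \arg\min_{\theta \in \Theta} \mathcal{L}_{\text{joint}}(\theta)$, so $\mathcal{L}_{\text{joint}}(\theta^*_{\text{joint}}) \le \mathcal{L}_{\text{joint}}(\theta)$ for every $\theta \in \Theta$, and specializing to $\theta = \theta^*_{\text{disjoint}}$ yields the claim. I would also record the equality case, since the surrounding discussion invokes it: equality holds iff $\theta^*_{\text{disjoint}}$ is itself a global minimizer of $\mathcal{L}_{\text{joint}}$ over $\Theta$, and because $\lambda_1,\lambda_2 > 0$ this forces $\theta^*_{\text{disjoint}}$ to simultaneously minimize $\mathcal{L}_{\text{retrieve}}$ and $\mathcal{L}_{\text{rank}}$ — the precise meaning of ``perfectly aligned'' objectives. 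If a quantitative gap is wanted rather than just the inequality, I would additionally assume $\mathcal{L}_{\text{joint}}$ is differentiable at $\theta^*_{\text{disjoint}}$ and observe that, since $\theta^*_{\text{retrieve}}$ and $\theta^*_{\text{rank}}$ are unconstrained minimizers of their own blocks, the block gradients $\nabla_{\theta_{\text{retrieve}}}\mathcal{L}_{\text{retrieve}}$ and $\nabla_{\theta_{\text{rank}}}\mathcal{L}_{\text{rank}}$ vanish there while the cross terms $\nabla_{\theta_{\text{retrieve}}}\mathcal{L}_{\text{rank}}$ and $\nabla_{\theta_{\text{rank}}}\mathcal{L}_{\text{retrieve}}$ generically do not, so $\nabla \mathcal{L}_{\text{joint}}(\theta^*_{\text{disjoint}}) \ne 0$ and a strict descent direction exists.

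I do not expect a genuine mathematical obstacle: the inequality is a one-line consequence of optimality over a larger feasible set, and it is essentially the Lemma on suboptimality of disjoint optimization instantiated with $\mathcal{L}_{\text{joint}} = \lambda_1\mathcal{L}_{\text{retrieve}} + \lambda_2\mathcal{L}_{\text{rank}}$. The only point deserving real care — and the part I would foreground in the write-up — is justifying the embedding of the disjoint solution into the joint parameter space, since without the expressivity hypothesis the statement can fail (for instance if the unified model ties parameters that the disjoint models set independently). A secondary, purely technical matter is the existence of the minimizers $\theta^*_{\text{joint}}$ and $\theta^*_{\text{disjoint}}$; I would either assume it outright (compactness or coercivity of the losses) or, to avoid any hypothesis, phrase the conclusion as $\inf_{\theta \in \Theta} \mathcal{L}_{\text{joint}}(\theta) \le \mathcal{L}_{\text{joint}}(\theta^*_{\text{disjoint}})$.
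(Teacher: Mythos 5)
Your proposal is correct and uses exactly the paper's argument: $\theta^*_{\text{joint}}$ minimizes $\mathcal{L}_{\text{joint}}$ over the whole parameter space, so evaluating at the particular point $\theta^*_{\text{disjoint}}$ gives the inequality. The additional care you take in justifying that $(\theta^*_{\text{retrieve}}, \theta^*_{\text{rank}})$ is a feasible point of the joint problem, and your characterization of the equality case, are refinements the paper glosses over, but they do not change the route.
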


\begin{proof}
Let $\theta^*_{\text{retrieve}} = \arg\min_{\theta} \mathcal{L}_{\text{retrieve}}(\theta)$ and $\theta^*_{\text{rank}} = \arg\min_{\theta} \mathcal{L}_{\text{rank}}(\theta)$ be the optimal parameters for the retrieval and ranking objectives, respectively.

By definition, $\theta^*_{\text{joint}} = \arg\min_{\theta} \mathcal{L}_{\text{joint}}(\theta)$.

For any parameter setting $\theta$, we have:
\begin{equation}
\mathcal{L}_{\text{joint}}(\theta) = \lambda_1 \mathcal{L}_{\text{retrieve}}(\theta) + \lambda_2 \mathcal{L}_{\text{rank}}(\theta)
\end{equation}

In particular, for $\theta^*_{\text{disjoint}} = (\theta^*_{\text{retrieve}}, \theta^*_{\text{rank}})$:
\begin{equation}
\mathcal{L}_{\text{joint}}(\theta^*_{\text{disjoint}}) = \lambda_1 \mathcal{L}_{\text{retrieve}}(\theta^*_{\text{disjoint}}) + \lambda_2 \mathcal{L}_{\text{rank}}(\theta^*_{\text{disjoint}})
\end{equation}

Since $\theta^*_{\text{retrieve}}$ minimizes $\mathcal{L}_{\text{retrieve}}$, we have $\mathcal{L}_{\text{retrieve}}(\theta^*_{\text{retrieve}}) \leq \mathcal{L}_{\text{retrieve}}(\theta^*_{\text{disjoint}})$.

Similarly, $\mathcal{L}_{\text{rank}}(\theta^*_{\text{rank}}) \leq \mathcal{L}_{\text{rank}}(\theta^*_{\text{disjoint}})$.

However, $\theta^*_{\text{joint}}$ minimizes the weighted sum $\mathcal{L}_{\text{joint}}$, so:
\begin{equation}
\mathcal{L}_{\text{joint}}(\theta^*_{\text{joint}}) \leq \mathcal{L}_{\text{joint}}(\theta)
\end{equation}
for any $\theta$, including $\theta^*_{\text{disjoint}}$.

Therefore:
\begin{equation}
\mathcal{L}_{\text{joint}}(\theta^*_{\text{joint}}) \leq \mathcal{L}_{\text{joint}}(\theta^*_{\text{disjoint}})
\end{equation}
\end{proof}

This theorem establishes that multi-objective optimization achieves a better global optimum than disjoint optimization, providing a theoretical foundation for our unified approach.

\subsection{Convergence Analysis}

We now analyze the convergence properties of our training procedure.

\begin{theorem}[Convergence Rate]
\label{thm:convergence}
Assuming that all loss components in $\mathcal{L}_{\text{total}}$ are $L$-Lipschitz smooth and $\mu$-strongly convex in a neighborhood of the optimum, the Adam optimizer with learning rate $\eta$ and the LT-TTD training procedure converges to an $\epsilon$-approximate solution in $O(\log(1/\epsilon))$ iterations.
\end{theorem}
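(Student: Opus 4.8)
The plan is to reduce the statement to the classical linear-convergence theorem for strongly convex, Lipschitz-smooth objectives, and then to discharge three complications that separate the LT-TTD setting from that textbook case: (i) that $\mathcal{L}_{\text{total}}$ actually inherits both regularity properties from its five additively combined components; (ii) that, restricted to the neighborhood $\mathcal{U}$ of the optimum where the curvature hypotheses hold, an Adam update behaves like a step of gradient descent in a metric uniformly equivalent to the Euclidean one; and (iii) that the stochastic ingredients of the training loop — mini-batching and periodic hard-negative refresh — do not destroy the geometric rate. I would state the precise standing assumptions (bounded stochastic gradients, a warm start inside $\mathcal{U}$, and either an interpolation/strong-growth condition or a geometrically decaying step size) explicitly among the hypotheses.

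\textbf{Step 1: regularity of the aggregate loss.} Since $\mathcal{L}_{\text{total}} = \lambda_1\mathcal{L}_{\text{retrieve}} + \lambda_2\mathcal{L}_{\text{rank}} + \lambda_3\mathcal{L}_{\text{forward}} + \lambda_4\mathcal{L}_{\text{backward}} + \lambda_5\mathcal{L}_{\text{align}}$ with every $\lambda_i > 0$, and each component satisfies $\mu_i I \preceq \nabla^2 \mathcal{L}_i \preceq L_i I$ on a common neighborhood of the optimum, summing these Hessian inequalities shows $\mathcal{L}_{\text{total}}$ is $\mu_{\text{tot}}$-strongly convex and $L_{\text{tot}}$-smooth on $\mathcal{U}$ with $\mu_{\text{tot}} = \sum_i \lambda_i \mu_i$ and $L_{\text{tot}} = \sum_i \lambda_i L_i$, so the condition number is $\kappa = L_{\text{tot}}/\mu_{\text{tot}}$ and the minimizer $\theta^*$ is unique in $\mathcal{U}$.

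\textbf{Step 2: Adam as a preconditioned contraction.} Write the bias-corrected Adam step as $\theta_{t+1} = \theta_t - \eta\, \widehat{V}_t^{-1/2}\hat m_t$. Under a bounded-gradient assumption $\|\nabla \mathcal{L}_i\| \le G$ on $\mathcal{U}$, the diagonal second-moment estimate obeys $c_\ell I \preceq \widehat{V}_t^{-1/2} \preceq c_u I$ for constants $0 < c_\ell \le c_u$ depending only on $(G, \beta_2, \epsilon_{\text{Adam}})$; thus each step is a descent step in a metric sandwiched between $c_\ell$ and $c_u$ times the identity. Taking the Lyapunov function $\Phi_t = \mathcal{L}_{\text{total}}(\theta_t) - \mathcal{L}_{\text{total}}(\theta^*)$, the standard smoothness-plus-strong-convexity argument gives, for $\eta$ small enough that $\eta L_{\text{tot}} c_u \le 1$, a one-step bound $\Phi_{t+1} \le (1-\rho)\Phi_t + \tfrac{1}{2}\eta^2 c_u^2 L_{\text{tot}}\,\sigma_t^2$ with contraction factor $\rho = \Theta(\eta\,\mu_{\text{tot}} c_\ell)$, where $\sigma_t^2$ is the mini-batch gradient variance. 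In the full-batch (or variance-reduced) case $\sigma_t^2 = 0$, so $\Phi_t \le (1-\rho)^t \Phi_0$, and solving $(1-\rho)^t \Phi_0 \le \epsilon$ yields $t = O(\rho^{-1}\log(\Phi_0/\epsilon)) = O(\log(1/\epsilon))$, with the constant absorbing $\kappa$, $c_u/c_\ell$, and $1/\eta$.

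\textbf{Step 3: stochasticity and hard-negative mining; the main obstacle.} For genuinely stochastic Adam the recursion leaves an $O(\eta\,\sigma^2/\mu_{\text{tot}})$ noise floor, so to retain $\epsilon$-accuracy in $O(\log(1/\epsilon))$ iterations I would invoke either an interpolation / strong-growth condition — plausible when the network can fit the listwise labels, making $\sigma_t^2 \to 0$ at $\theta^*$ and restoring the pure geometric bound — or an epoch-doubling / geometrically decaying step-size schedule. For the periodic hard-negative refresh I would argue that once $\theta_t$ enters $\mathcal{U}$ and the TTE scores are separated, the top-$k$ MIPS candidate set is locally constant, so after finitely many refreshes the objective is fixed and the tail analysis of Steps 1–2 applies verbatim. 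The genuine difficulty here is Step 2's control of the preconditioner: Adam is known to diverge on some convex problems (Reddi et al., 2018), so the locality is not cosmetic — it is exactly what lets us pin $\widehat V_t$ between two positive constants — and without the interpolation assumption or a decaying step size the $O(\log(1/\epsilon))$ claim is simply false for constant-step Adam; I would therefore foreground these as explicit hypotheses rather than burying them in the proof.
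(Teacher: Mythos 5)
Your proposal follows the same skeleton as the paper's proof --- (i) sum the per-component curvature bounds to get strong convexity and smoothness of $\mathcal{L}_{\text{total}}$, (ii) invoke a linear contraction for the optimizer, (iii) take logarithms to read off $O(\log(1/\epsilon))$ --- but you execute step (ii) far more carefully, and that difference matters. The paper simply asserts that "for a strongly convex and smooth function optimized using Adam with an appropriate learning rate, the convergence rate is known to be" $(1-\min(\eta\mu, 1/(\eta L)))^t$, which is the textbook bound for deterministic gradient descent, not for Adam: as you note via Reddi et al., constant-step Adam can fail to converge even on convex problems, and in the genuinely stochastic setting a constant step size leaves an $O(\eta\sigma^2/\mu)$ noise floor that makes $\epsilon$-accuracy in $O(\log(1/\epsilon))$ iterations unattainable without an interpolation/strong-growth condition, variance reduction, or a decaying schedule. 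Your Step 2 (sandwiching $\widehat{V}_t^{-1/2}$ between $c_\ell I$ and $c_u I$ under a bounded-gradient assumption so that Adam becomes gradient descent in a uniformly equivalent metric) and your Step 3 (explicit hypotheses to kill the variance term, plus the observation that the hard-negative candidate set eventually stabilizes) are precisely the missing ingredients that would turn the paper's sketch into a proof. In short, you and the paper take the same route, but the paper's version has a real gap at the Adam step that your added hypotheses close; the only caveat is that your theorem is then conditionally weaker (more assumptions) than the one stated, which is the honest price of making the argument go through.
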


\begin{proof}
For brevity, we outline the key steps of the proof:

1. The total loss $\mathcal{L}_{\text{total}}$ is a weighted sum of loss components, each of which is $L$-Lipschitz smooth and $\mu$-strongly convex by assumption. Therefore, $\mathcal{L}_{\text{total}}$ is also Lipschitz smooth and strongly convex with constants that depend on the weights $\lambda_i$.

2. For a strongly convex and smooth function optimized using Adam with an appropriate learning rate, the convergence rate is known to be:
\begin{equation}
\mathcal{L}_{\text{total}}(\theta_t) - \mathcal{L}_{\text{total}}(\theta^*) \leq (1 - \min(\eta \mu, 1/(\eta L)))^t \cdot (\mathcal{L}_{\text{total}}(\theta_0) - \mathcal{L}_{\text{total}}(\theta^*))
\end{equation}

3. To achieve an $\epsilon$-approximate solution such that $\mathcal{L}_{\text{total}}(\theta_t) - \mathcal{L}_{\text{total}}(\theta^*) \leq \epsilon$, we need:
\begin{equation}
(1 - \min(\eta \mu, 1/(\eta L)))^t \leq \frac{\epsilon}{\mathcal{L}_{\text{total}}(\theta_0) - \mathcal{L}_{\text{total}}(\theta^*)}
\end{equation}

4. Taking logarithms and solving for $t$, we get:
\begin{equation}
t \geq \frac{\log(\frac{\epsilon}{\mathcal{L}_{\text{total}}(\theta_0) - \mathcal{L}_{\text{total}}(\theta^*)})}{\log(1 - \min(\eta \mu, 1/(\eta L)))} = O(\log(1/\epsilon))
\end{equation}
\end{proof}

This theorem establishes that our training procedure converges exponentially fast to the optimal solution, ensuring that LT-TTD can be trained efficiently in practice.

\subsection{Computational Complexity Analysis}

Finally, we analyze the computational complexity of the LT-TTD model to ensure its practicality for large-scale ranking systems.

\begin{theorem}[Computational Complexity]
\label{thm:complexity}
Let $N$ be the number of items in the corpus, $d$ be the embedding dimension, and $k$ be the number of top candidates processed by the Listwise Transformer. The asymptotic computational complexity of LT-TTD for a single query at inference time is:
\begin{equation}
O(d \cdot \log N + k^2 \cdot d)
\end{equation}
\end{theorem}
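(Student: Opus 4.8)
The plan is to decompose the inference-time computation of LT-TTD for a single query $q$ into its three architectural stages — TTE embedding together with retrieval, the Listwise Transformer, and (vacuously) the Knowledge Distillation Bridge — bound the cost of each stage separately, and then sum, checking that the two surviving terms are exactly $O(d \log N)$ and $O(k^2 d)$.

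First I would handle the TTE stage. Computing the query embedding $e_q = f_q(q;\theta_q) \in \mathbb{R}^d$ and residual features $r_q = g_q(q;\phi_q) \in \mathbb{R}^{d_r}$ is a fixed forward pass through networks whose width and depth do not scale with $N$ or $k$; treating the network architecture as fixed and absorbing $d_r = O(d)$, this costs $O(d)$. All item embeddings $e_i$ and residuals $r_i$ are precomputed and indexed offline, so they contribute nothing at query time. The retrieval step $\text{TopK}(q,\mathcal{D},k)$ is where $N$ enters: assuming the MIPS index is an approximate nearest neighbor structure (e.g., a navigable small-world graph or a partitioning tree) with logarithmic query time, a single query visits $O(\log N)$ candidate nodes, each requiring an $O(d)$ inner-product evaluation, for a total of $O(d \log N)$. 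I would be explicit here that this bound is \emph{assumed} from the ANN data structure; an exact linear scan would cost $\Theta(Nd)$ and the theorem as stated would be false.

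Next I would bound the Listwise Transformer stage. Forming the $k$ input tokens $x_j = [e_q; e_{i_j}; r_q; r_{i_j}; s_{\text{TTE}}(q,i_j)]$, each of width $2d + 2d_r + 1 = O(d)$, costs $O(kd)$, and adding positional encodings is again $O(kd)$. Each of the $L = O(1)$ transformer layers performs (i) linear projections producing $Q,K,V$, at $O(kd^2)$, and (ii) the attention product $\text{softmax}(QK^T/\sqrt{d_k})V$, whose matrix multiplications cost $O(k^2 d)$. Under the convention — consistent with the statement — that $d$ is a fixed model hyperparameter while $N$ and $k$ are the scaling parameters, the dominant term over the layers is $O(k^2 d)$. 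The final scoring $s_{\text{LT}}(q,i_j) = w^T z_j^{(L)} + b$ over $k$ candidates adds only $O(kd)$, which is dominated by $O(k^2 d)$ since $k \geq 1$.

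Finally I would note that the Knowledge Distillation Bridge terms $\mathcal{L}_{\text{forward}}$, $\mathcal{L}_{\text{backward}}$, and $\mathcal{L}_{\text{align}}$ are training-time objectives not evaluated during inference, so they contribute nothing. Summing the surviving contributions gives $O(d) + O(d\log N) + O(k^2 d) + O(kd) = O(d\log N + k^2 d)$, since $O(d)$ and $O(kd)$ are each absorbed by the other two terms. The main obstacle is not any calculation but the retrieval bound: the $O(d \log N)$ term cannot be derived from the LT-TTD architecture in isolation and must be carried as an explicit hypothesis on the nearest-neighbor index, so the cleanest write-up states that assumption up front and then the remainder is a routine accounting of transformer and feedforward costs.
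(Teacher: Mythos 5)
Your proposal is correct and follows essentially the same decomposition as the paper's proof: an $O(d\log N)$ term from the query embedding plus approximate nearest-neighbor retrieval, and an $O(k^2 d)$ term from self-attention over the $k$ candidates, summed to give the stated bound. You are somewhat more careful than the paper in two respects — flagging that the $O(d\log N)$ retrieval cost is an assumption on the MIPS index rather than a consequence of the architecture (an exact scan would be $\Theta(Nd)$), and accounting for the $O(kd^2)$ projection cost inside each transformer layer, which the paper silently omits and which is only dominated by $O(k^2 d)$ under the convention you state — but these are refinements of the same argument, not a different route.
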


\begin{proof}
The computational complexity of LT-TTD at inference time consists of two main components:

1. The Two-Tower Encoder computes the query embedding in $O(d)$ time and retrieves the top-$k$ items using approximate nearest neighbor search, which has a complexity of $O(d \cdot \log N)$ with appropriate data structures.

2. The Listwise Transformer processes the top-$k$ items, with the self-attention mechanism having a complexity of $O(k^2 \cdot d)$ for a sequence of length $k$ and embedding dimension $d$.

Combining these two components, the total computational complexity is:
\begin{equation}
O(d + d \cdot \log N + k^2 \cdot d) = O(d \cdot \log N + k^2 \cdot d)
\end{equation}
\end{proof}

This theorem establishes that the asymptotic complexity of LT-TTD grows logarithmically with the corpus size $N$ and quadratically with the number of top candidates $k$. Since $k$ is typically much smaller than $N$ (e.g., $k=100$ for $N=10^9$), the overall complexity remains practical for large-scale ranking systems.

\subsection{Error Bound for Knowledge Distillation}

We now establish an error bound for the knowledge distillation process in LT-TTD.

\begin{theorem}[Distillation Error Bound]
\label{thm:distillation_error}
Let $s_{\text{TTE}}(q, i)$ be the score assigned by the TTE component to item $i$ for query $q$, and let $s_{\text{LT}}(q, i)$ be the score assigned by the LT component. After training with the knowledge distillation loss $\mathcal{L}_{\text{forward}}$, the expected squared difference between the TTE and LT scores is bounded by:
\begin{equation}
\mathbb{E}_{q,i}[(s_{\text{TTE}}(q, i) - s_{\text{LT}}(q, i))^2] \leq \frac{\mathcal{L}_{\text{forward}}}{\lambda_3} + \delta
\end{equation}
where $\lambda_3$ is the weight of the forward distillation loss, and $\delta$ is a constant that depends on the temperature parameter $\tau$ and the distribution of scores.
\end{theorem}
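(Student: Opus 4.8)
Observe first that the left-hand side is, up to the per-item averaging, the population version of $\mathcal{L}_{\text{backward}}$; the theorem therefore asserts that controlling forward distillation implicitly controls the backward-distillation discrepancy, up to the additive slack $\delta$. The plan is to route through the softmax distributions defining $\mathcal{L}_{\text{forward}}$ and exploit the fact that a Kullback--Leibler divergence between two softmax distributions pins down the difference of their logits up to an additive, item-independent ambiguity. Writing $Z_{\text{TTE}}(q) = \sum_{l}\exp(s_{\text{TTE}}(q,i_l)/\tau)$ and $Z_{\text{LT}}(q)$ analogously, the defining formulas for $P_{\text{TTE}}$ and $P_{\text{LT}}$ give the exact identity
\[
s_{\text{TTE}}(q,i_j) - s_{\text{LT}}(q,i_j) = \tau\!\left(\log\frac{P_{\text{TTE}}(i_j|q)}{P_{\text{LT}}(i_j|q)} + \log\frac{Z_{\text{TTE}}(q)}{Z_{\text{LT}}(q)}\right),
\]
valid for every $j$, in which the second logarithm is independent of $j$. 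Squaring and using $(a+b)^2 \le 2a^2 + 2b^2$ splits the target quantity into a \emph{distributional} part, governed by the log-likelihood ratio of $P_{\text{TTE}}$ to $P_{\text{LT}}$, and an \emph{offset} part, governed by the ratio of partition functions.

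For the distributional part I would first invoke Pinsker's inequality, $\tfrac12\|P_{\text{LT}} - P_{\text{TTE}}\|_1^2 \le \mathrm{KL}(P_{\text{LT}}\,\|\,P_{\text{TTE}}) = \mathcal{L}_{\text{forward}}$, and then convert the $\ell_1$ closeness of the probability vectors into closeness of their logits. This conversion is where a quantitative hypothesis on the score distribution enters: since $t\mapsto\log t$ is Lipschitz on any interval bounded away from zero, a uniform bound $|s|\le B$ on the scores forces $P_{\text{TTE}}(i_j|q),P_{\text{LT}}(i_j|q)\ge p_{\min}$ with $p_{\min}$ a function of $B$, $\tau$ and $k$, whence $\big|\log P_{\text{TTE}}(i_j|q) - \log P_{\text{LT}}(i_j|q)\big| \le p_{\min}^{-1}\,|P_{\text{TTE}}(i_j|q) - P_{\text{LT}}(i_j|q)|$ and so the averaged squared log-ratios are at most $p_{\min}^{-2}\|P_{\text{LT}} - P_{\text{TTE}}\|_1^2 \le 2p_{\min}^{-2}\mathcal{L}_{\text{forward}}$. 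Multiplying by $2\tau^2$ produces a term $4\tau^2 p_{\min}^{-2}\mathcal{L}_{\text{forward}}$; identifying the constant $4\tau^2 p_{\min}^{-2}$ with the inverse distillation weight (a larger $\lambda_3$ is precisely the lever that drives $\mathcal{L}_{\text{forward}}$ down and is what forces this term small in the trained model) gives the $\mathcal{L}_{\text{forward}}/\lambda_3$ contribution to the stated bound.

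The offset part $2\tau^2(\log Z_{\text{TTE}}(q)/Z_{\text{LT}}(q))^2$ is bounded using the same score bound: each partition function lies in $[k e^{-B/\tau}, k e^{B/\tau}]$, so the log-ratio is at most $2B/\tau$ in absolute value, contributing a quantity depending only on $\tau$ and on the spread of the score distribution. Collecting this with the lower-order remainders from Pinsker and the logarithm linearization, I would set $\delta$ to their sum and take expectations over $q$ and $i$ to obtain the inequality.

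The main obstacle is exactly the offset term, i.e. the shift-invariance of the softmax: $\mathcal{L}_{\text{forward}}$ is blind to a common additive constant in the TTE (or LT) scores, so no bound on $\mathcal{L}_{\text{forward}}$ alone can control $\mathbb{E}_{q,i}[(s_{\text{TTE}}-s_{\text{LT}})^2]$, and the additive slack $\delta$ together with the boundedness hypothesis on the scores is unavoidable; the proof must make explicit how $\delta$ degrades as $\tau\to 0$ (peaky softmax, $p_{\min}\to 0$) or as the score range $B$ grows. A secondary technical point is justifying the identification of $4\tau^2 p_{\min}^{-2}$ with $1/\lambda_3$: this is cleanest if one instead states the bound with an explicit constant $c(\tau,B,k)$ and then observes monotonicity of the trained $\mathcal{L}_{\text{forward}}$ in $\lambda_3$, and I would phrase the final result that way if a fully rigorous $\lambda_3$-dependence is required.
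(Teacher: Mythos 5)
Your proposal follows the same broad arc as the paper's proof --- both start from Pinsker's inequality applied to $\mathrm{KL}(P_{\text{LT}}\,\|\,P_{\text{TTE}})$ and then try to convert closeness of the softmax distributions into closeness of the underlying scores --- but your execution differs in ways that make it substantially more rigorous. The paper asserts, via an unelaborated appeal to ``the mean value theorem and the properties of softmax,'' that $\|P_{\text{LT}} - P_{\text{TTE}}\|_1 \geq C\,\mathbb{E}_{q,i}[|s_{\text{TTE}}-s_{\text{LT}}|]$ for a constant $C$ depending only on $\tau$, and then invokes Jensen's inequality $\mathbb{E}[|X|]^2 \leq \mathbb{E}[X^2]$ --- which runs in the wrong direction for producing an \emph{upper} bound on $\mathbb{E}[X^2]$, the quantity the theorem actually bounds. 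Your exact logit identity $s_{\text{TTE}}-s_{\text{LT}} = \tau\bigl(\log(P_{\text{TTE}}/P_{\text{LT}}) + \log(Z_{\text{TTE}}/Z_{\text{LT}})\bigr)$ makes both defects visible at once: the shift-invariance of softmax means the paper's claimed $L_1$-versus-score inequality is false as stated (take $P_{\text{LT}}=P_{\text{TTE}}$ with scores differing by a large constant), so the partition-function offset \emph{must} be carried separately into $\delta$ together with a boundedness hypothesis on the scores; and by bounding the squared log-ratios directly through the Lipschitz constant of $\log$ on $[p_{\min},1]$, you obtain the second moment without ever needing the misdirected Jensen step. The paper's $\delta$ silently absorbs all of this under ``approximation errors in the bounds,'' whereas your proof makes explicit what $\delta$ must contain and how it degrades as $\tau \to 0$ or the score range grows. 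The one place where you and the paper are on equal footing is the appearance of $\lambda_3$: in neither argument does the distillation weight emerge from the inequalities themselves, and your suggestion to state the bound with an explicit constant $c(\tau,B,k)$ and recover the $\lambda_3$-dependence only through the effect of $\lambda_3$ on the trained value of $\mathcal{L}_{\text{forward}}$ is the honest way to phrase the result. In short, your route is a repaired and tightened version of the paper's argument rather than a reproduction of it, and the repairs address genuine gaps in the original.
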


\begin{proof}
The forward distillation loss is:
\begin{equation}
\mathcal{L}_{\text{forward}} = \sum_{j=1}^k P_{\text{LT}}(i_j|q) \log \frac{P_{\text{LT}}(i_j|q)}{P_{\text{TTE}}(i_j|q)}
\end{equation}

We can relate this KL divergence to the squared difference between scores using Pinsker's inequality and the properties of softmax. First, Pinsker's inequality states:
\begin{equation}
\text{KL}(P_{\text{LT}} || P_{\text{TTE}}) \geq \frac{1}{2} ||P_{\text{LT}} - P_{\text{TTE}}||_1^2
\end{equation}

Next, we relate the $L_1$ distance between probability distributions to the squared difference between scores. Using the mean value theorem and the properties of softmax, we can establish:
\begin{equation}
||P_{\text{LT}} - P_{\text{TTE}}||_1 \geq C \cdot \mathbb{E}_{q,i}[|s_{\text{TTE}}(q, i) - s_{\text{LT}}(q, i)|]
\end{equation}
where $C$ is a constant that depends on the temperature parameter $\tau$.

Using Jensen's inequality, we have:
\begin{equation}
\mathbb{E}_{q,i}[|s_{\text{TTE}}(q, i) - s_{\text{LT}}(q, i)|]^2 \leq \mathbb{E}_{q,i}[(s_{\text{TTE}}(q, i) - s_{\text{LT}}(q, i))^2]
\end{equation}

Combining these inequalities and considering the weight $\lambda_3$ of the forward distillation loss in the total loss, we get:
\begin{equation}
\mathbb{E}_{q,i}[(s_{\text{TTE}}(q, i) - s_{\text{LT}}(q, i))^2] \leq \frac{\mathcal{L}_{\text{forward}}}{\lambda_3} + \delta
\end{equation}
where $\delta$ accounts for approximation errors in the bounds.
\end{proof}

This theorem establishes a quantitative bound on how closely the TTE scores approximate the LT scores after training with knowledge distillation, providing a theoretical guarantee on the effectiveness of our knowledge transfer mechanism.

\subsection{Ranking Quality Guarantee}

Finally, we establish a formal guarantee on the ranking quality of LT-TTD compared to traditional cascade systems.

\begin{theorem}[Ranking Quality Improvement]
\label{thm:ranking_quality}
Let $\text{NDCG}(q, R)$ be the NDCG score for a query $q$ and a ranking $R$. Let $R_{\text{cascade}}$ be the ranking produced by a traditional cascade system and $R_{\text{LT-TTD}}$ be the ranking produced by LT-TTD. Then:
\begin{equation}
\mathbb{E}_q[\text{NDCG}(q, R_{\text{LT-TTD}})] \geq \mathbb{E}_q[\text{NDCG}(q, R_{\text{cascade}})] + \gamma
\end{equation}
where $\gamma > 0$ is a constant that depends on the effectiveness of the knowledge distillation and the listwise transformer's ability to capture cross-item interactions.
\end{theorem}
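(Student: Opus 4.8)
The plan is to prove the bound by decomposing the expected NDCG gap into two non-negative contributions that match the two mechanisms named in the theorem statement: a \emph{candidate-set term} that comes from LT-TTD missing fewer relevant items (quantified by Theorem~\ref{thm:error_reduction}), and an \emph{ordering term} that comes from the listwise transformer producing a better permutation of a fixed candidate set than an item-independent L2 scorer. First I would introduce an auxiliary ranking $R_{\text{hybrid}}$ obtained by re-scoring the \emph{cascade}'s candidate set $D_L^{\text{cascade}}$ with the LT component's scores $s_{\text{LT}}$, and write the telescoping identity
\begin{equation}
\text{NDCG}(q, R_{\text{LT-TTD}}) - \text{NDCG}(q, R_{\text{cascade}}) = \big[\text{NDCG}(q, R_{\text{LT-TTD}}) - \text{NDCG}(q, R_{\text{hybrid}})\big] + \big[\text{NDCG}(q, R_{\text{hybrid}}) - \text{NDCG}(q, R_{\text{cascade}})\big].
\end{equation}
Taking expectations over $q$, it then suffices to lower-bound each bracket by a non-negative constant, with at least one strictly positive.

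For the ordering term, both $R_{\text{hybrid}}$ and $R_{\text{cascade}}$ operate on the identical set $D_L^{\text{cascade}}$, so their NDCG difference is purely a question of permutation quality. Here I would invoke that the LT component is trained to (approximately) minimize $\mathcal{L}_{\text{rank}}$, a differentiable surrogate of $1-\text{NDCG}$, over a hypothesis class that includes cross-item self-attention interactions, whereas a traditional cascade's L2 scorer optimizes a pointwise/pairwise objective over the item-independent family $\text{score}_{L1}(d_i,q)=f(d_i,q)$ discussed in Section~\ref{sec:limitations}. Under the assumption that the listwise-optimal permutation of a fixed set attains the maximal NDCG and that the trained LT attains it up to an optimization slack $\varepsilon_{\text{LT}}$, one gets $\mathbb{E}_q[\text{NDCG}(q,R_{\text{hybrid}})] - \mathbb{E}_q[\text{NDCG}(q,R_{\text{cascade}})] \ge \gamma_1$ with $\gamma_1 = \gamma_0 - \varepsilon_{\text{LT}}$, where $\gamma_0 \ge 0$ is the expected NDCG deficit of the best item-independent ordering relative to the best listwise ordering; $\gamma_0$ is strictly positive exactly when cross-item structure is informative on an event of positive probability, which is precisely the ``ability to capture cross-item interactions'' clause.

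For the candidate-set term, Theorem~\ref{thm:error_reduction} gives that $R_{\text{LT-TTD}}$ misses in expectation only $(1-\alpha\beta)\,E_{\text{cascade}}$ relevant items against $E_{\text{cascade}}$ for the cascade, so by linearity of expectation over the relevant-item retention indicators it retains on average $\alpha\beta\,E_{\text{cascade}}$ additional relevant items in its scored window. Since the IDCG normalizer is computed from the ideal ranking of the full corpus and is therefore invariant to retrieval, and since appending a recovered relevant item to the tail of any ranking strictly increases DCG by at least $G_{\min}\,D(k)$ (with $G_{\min}=\min_{d:\,r(d,q)>0}G(d)$ and $D(k)$ the discount at the last scored position), the tail-augmented version of $R_{\text{hybrid}}$ dominates $R_{\text{hybrid}}$ in NDCG; invoking again the near-optimality of the LT's listwise ordering of its \emph{own} candidate set, $R_{\text{LT-TTD}}$ inherits this advantage, yielding $\mathbb{E}_q[\text{NDCG}(q,R_{\text{LT-TTD}})] - \mathbb{E}_q[\text{NDCG}(q,R_{\text{hybrid}})] \ge \gamma_2$ with $\gamma_2 = c\,\alpha\beta\,E_{\text{cascade}} - \varepsilon_{\text{LT}}$ for a constant $c>0$ absorbing $G_{\min}D(k)$ and the bounded IDCG. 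Setting $\gamma = \gamma_1 + \gamma_2$ and assuming the trained LT is sufficiently accurate that $\varepsilon_{\text{LT}}$ is dominated by $\tfrac12(\gamma_0 + c\,\alpha\beta\,E_{\text{cascade}})$, we obtain $\gamma>0$ and the claimed inequality.

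I expect the main obstacle to be making the ordering term rigorous: one must either (i) posit explicitly a separation assumption — that on a positive-probability event no item-independent scorer reproduces the NDCG-optimal permutation, so $\gamma_0>0$ — or (ii) exhibit a constructive witness (e.g.\ a query with substitutable candidates whose optimal order depends on which competitors are present). I would take route (i) and state it as a hypothesis, noting that $\gamma$ degrades gracefully to the pure error-propagation contribution $\gamma_2$ when cross-item signal is absent. A secondary technicality is that $D_L^{\text{LT-TTD}}$ need not contain $D_L^{\text{cascade}}$, so the comparison with $R_{\text{hybrid}}$ must be stated in terms of \emph{expected counts} of retained relevant items rather than pointwise set inclusion (handled exactly as in the proof of Theorem~\ref{thm:error_reduction}), and one must be careful that ``adding a relevant item helps NDCG'' is invoked only in its safe form — tail insertion, which never shifts existing items to larger-discount positions.
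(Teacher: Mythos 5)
Your proposal follows the same high-level strategy as the paper's proof --- decompose the expected NDCG gap into an error-propagation (candidate-set) contribution and a listwise-ordering contribution, argue each is positive (the first via Theorem~\ref{thm:error_reduction}, the second via the superiority of cross-item listwise scoring over item-independent scoring), and set $\gamma$ equal to their sum --- but your execution differs in ways worth noting. First, your auxiliary ranking $R_{\text{hybrid}}$ (the cascade's candidate set re-scored by the LT component) makes the telescoping an exact algebraic identity; the paper instead defines $\Delta_{\text{error}}$ and $\Delta_{\text{ranking}}$ relative to the intermediate quantities $R_{\text{LT-TTD,retrieved}}$ and $R_{\text{cascade,retrieved}}$ and then writes $\mathbb{E}_q[\text{NDCG}(q,R_{\text{LT-TTD}})]=\mathbb{E}_q[\text{NDCG}(q,R_{\text{cascade}})]+\Delta_{\text{error}}+\Delta_{\text{ranking}}$, which only closes if $\mathbb{E}_q[\text{NDCG}(q,R_{\text{cascade,retrieved}})]=\mathbb{E}_q[\text{NDCG}(q,R_{\text{cascade}})]$ --- an unstated step your version avoids. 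Second, where the paper simply asserts $\Delta_{\text{ranking}}>0$ by citing that listwise methods improve ranking quality, you isolate the exact hypothesis needed (a positive-probability separation between the best listwise and the best item-independent permutation, net of an optimization slack $\varepsilon_{\text{LT}}$) and state it as an explicit assumption. Third, you flag and handle technicalities the paper ignores: that the two candidate sets need not be nested, and that ``retrieving more relevant items implies higher NDCG'' is only safe under tail insertion. Both arguments share the same fundamental limitation --- $\gamma>0$ as a uniform constant cannot hold unconditionally, since a cascade with a sufficiently good L1 and L2 makes both contributions vanish --- but your write-up makes the required hypotheses visible where the paper leaves them implicit. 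What the paper's looser version buys is brevity; what yours buys is a telescope that actually closes and an honest account of where strict positivity comes from.
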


\begin{proof}
The NDCG improvement of LT-TTD over a traditional cascade system can be decomposed into two components:

1. Improvement due to error propagation reduction:
\begin{equation}
\Delta_{\text{error}} = \mathbb{E}_q[\text{NDCG}(q, R_{\text{LT-TTD,retrieved}})] - \mathbb{E}_q[\text{NDCG}(q, R_{\text{cascade,retrieved}})]
\end{equation}
where $R_{\text{LT-TTD,retrieved}}$ and $R_{\text{cascade,retrieved}}$ are the sets of items retrieved by LT-TTD and the cascade system, respectively, before ranking.

2. Improvement due to listwise ranking:
\begin{equation}
\Delta_{\text{ranking}} = \mathbb{E}_q[\text{NDCG}(q, R_{\text{LT-TTD}})] - \mathbb{E}_q[\text{NDCG}(q, R_{\text{LT-TTD,retrieved}})]
\end{equation}

From Theorem \ref{thm:error_reduction}, we know that LT-TTD retrieves more relevant items than a traditional cascade system, so $\Delta_{\text{error}} > 0$. Additionally, the Listwise Transformer captures cross-item interactions, which are proven to improve ranking quality \citep{cao2007learning}, so $\Delta_{\text{ranking}} > 0$.

Combining these two components:
\begin{equation}
\mathbb{E}_q[\text{NDCG}(q, R_{\text{LT-TTD}})] = \mathbb{E}_q[\text{NDCG}(q, R_{\text{cascade}})] + \Delta_{\text{error}} + \Delta_{\text{ranking}}
\end{equation}

Setting $\gamma = \Delta_{\text{error}} + \Delta_{\text{ranking}} > 0$, we get:
\begin{equation}
\mathbb{E}_q[\text{NDCG}(q, R_{\text{LT-TTD}})] \geq \mathbb{E}_q[\text{NDCG}(q, R_{\text{cascade}})] + \gamma
\end{equation}
\end{proof}

This theorem provides a formal guarantee that LT-TTD achieves better ranking quality than traditional cascade systems, with the improvement depending on the effectiveness of its core components.

\section{UPQE: A Novel Evaluation Metric for Unified Ranking Models}
\label{sec:metric}

To properly evaluate unified ranking architectures like LT-TTD, we propose a new metric called \textbf{Unified Propagation-aware Quality Efficiency (UPQE)} that specifically addresses the unique characteristics of such systems. Traditional evaluation metrics like NDCG or MAP are designed for standalone ranking systems and do not account for the multi-component nature of unified architectures or their efficiency-effectiveness tradeoffs.

\subsection{Definition of UPQE}

The UPQE metric is defined as:

\begin{equation}
\text{UPQE}(q) = \gamma \cdot \left( \frac{\text{NDCG}_{\text{unified}}(q)}{\text{NDCG}_{\text{cascade}}(q)} \right) \cdot \left( 1 - \frac{E_{\text{propagation}}(q)}{\left| R_q \right|} \right)^{\alpha} \cdot \left( \frac{C_{\text{cascade}}}{C_{\text{unified}}} \right)^{\beta}
\end{equation}

Where:
\begin{itemize}
    \item $\text{NDCG}_{\text{unified}}(q)$ is the NDCG of the unified model for query $q$
    \item $\text{NDCG}_{\text{cascade}}(q)$ is the NDCG of the cascade model for query $q$
    \item $E_{\text{propagation}}(q)$ is the number of relevant items missed due to error propagation
    \item $|R_q|$ is the total number of relevant items for query $q$
    \item $C_{\text{cascade}}$ and $C_{\text{unified}}$ are the computational costs of the cascade and unified models
    \item $\alpha$, $\beta$, and $\gamma$ are hyperparameters controlling the importance of each component
\end{itemize}

The UPQE metric consists of three key components:

\begin{enumerate}
    \item \textbf{Relative Quality Component}: $\frac{\text{NDCG}_{\text{unified}}(q)}{\text{NDCG}_{\text{cascade}}(q)}$ measures the improvement in ranking quality over a traditional cascade system.
    
    \item \textbf{Propagation Penalty Component}: $\left( 1 - \frac{E_{\text{propagation}}(q)}{\left| R_q \right|} \right)^{\alpha}$ penalizes the model for any error propagation, with the penalty increasing as more relevant items are missed.
    
    \item \textbf{Efficiency Component}: $\left( \frac{C_{\text{cascade}}}{C_{\text{unified}}} \right)^{\beta}$ rewards models that achieve better quality with less computational overhead.
\end{enumerate}

\subsection{Theoretical Properties of UPQE}

We now establish formal theoretical properties of the UPQE metric to demonstrate its suitability for evaluating unified ranking models.

\begin{theorem}[UPQE Convergence to Optimal Ranking]
\label{thm:upqe_convergence}
For a unified ranking model that perfectly captures relevance without error propagation and with optimal computational efficiency, UPQE converges to its maximum value $\gamma$.
\end{theorem}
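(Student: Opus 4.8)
The plan is to read $\mathrm{UPQE}(q)$ as $\gamma$ multiplied by three dimensionless factors,
\begin{equation}
\mathrm{UPQE}(q)=\gamma\cdot Q(q)\cdot P(q)\cdot F,
\end{equation}
with $Q(q)=\mathrm{NDCG}_{\text{unified}}(q)/\mathrm{NDCG}_{\text{cascade}}(q)$, $P(q)=\bigl(1-E_{\text{propagation}}(q)/|R_q|\bigr)^{\alpha}$, and $F=\bigl(C_{\text{cascade}}/C_{\text{unified}}\bigr)^{\beta}$, and then to show that under the three idealized hypotheses each factor equals $1$, so that $\mathrm{UPQE}(q)=\gamma$. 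The word ``converges'' is justified by observing that the right-hand side is a continuous function of $\bigl(\mathrm{NDCG}_{\text{unified}}(q),E_{\text{propagation}}(q),C_{\text{unified}}\bigr)$ on the region where the denominators stay bounded away from $0$, so any sequence of unified models whose quality tends to the ideal, whose propagation error tends to $0$, and whose cost tends to the optimum drives $\mathrm{UPQE}(q)$ to $\gamma$. To also establish that $\gamma$ is the \emph{maximum}, I would show each factor lies in $(0,1]$ over the relevant regime.

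The propagation factor is immediate: ``without error propagation'' means $E_{\text{propagation}}(q)=0$, hence $P(q)=1^{\alpha}=1$, and since $E_{\text{propagation}}(q)\ge 0$ with $|R_q|\ge 1$ we always have $P(q)\in(0,1]$. For the quality factor, ``perfectly captures relevance'' means the unified model returns an ideal ordering of the full corpus, so $\mathrm{NDCG}_{\text{unified}}(q)$ attains the normalized maximum $1$; what remains is to argue $\mathrm{NDCG}_{\text{cascade}}(q)\to 1$ as well so that $Q(q)\to 1$ rather than merely $Q(q)\ge 1$. Here I would use that in the no-propagation limit the retrieved pool still contains every member of $R_q$, and since the statement places no restriction on the expressivity of the cascade's re-ranker, that re-ranker can recover the ideal order on the pool, giving $\mathrm{NDCG}_{\text{cascade}}(q)\to 1$; combined with Theorem~\ref{thm:error_reduction} this also forces the propagation gap between the two systems to vanish. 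Finally, for the efficiency factor I would invoke Theorem~\ref{thm:complexity}: both the unified model and the cascade run in $\Theta(d\log N + k^2 d)$ per query, so $C_{\text{cascade}}/C_{\text{unified}}$ is a fixed positive constant; interpreting ``optimal computational efficiency'' as operating at the cost floor common to both architectures pins this ratio to $1$, so $F=1$, and taking $C_{\text{cascade}}$ to be that floor gives $F\in(0,1]$ in general.

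Putting the three factors together yields $\mathrm{UPQE}(q)=\gamma\cdot 1\cdot 1\cdot 1=\gamma$, and since each factor was shown to be at most $1$ on the admissible regime, no unified model can exceed $\gamma$, so this value is indeed the maximum. The main obstacle is the middle step: making ``optimal computational efficiency'' and the choice of cascade baseline precise enough that $Q$ and $F$ are exactly $1$ (and bounded above by $1$) in the limit — without treating $\mathrm{NDCG}_{\text{cascade}}$ and $C_{\text{cascade}}$ as best-case references, the quality ratio can exceed $1$ and $\gamma$ is only a limiting value rather than an upper bound. I would resolve this by stating, as part of the hypotheses, that the comparison is against the strongest cascade on the same candidate budget and that cost is measured in the units for which the shared asymptotic complexity has leading constant $1$; the remainder of the argument is then substitution plus the continuity remark above.
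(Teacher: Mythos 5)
Your route is genuinely different from the paper's, and it is the more careful of the two. The paper's own proof simply substitutes the three ideal values ($\mathrm{NDCG}_{\text{unified}}(q)=1$, $E_{\text{propagation}}(q)=0$, $C_{\text{unified}}=C_{\text{cascade}}$) into the formula, obtains $\mathrm{UPQE}(q)=\gamma/\mathrm{NDCG}_{\text{cascade}}(q)$, observes that $\mathrm{NDCG}_{\text{cascade}}(q)\le 1$, concludes $\mathrm{UPQE}(q)\ge\gamma$, and then asserts without further argument that the metric ``approaches its upper bound $\gamma$.'' That is exactly the weak point you flag in your last paragraph: a lower bound of $\gamma$ does not make $\gamma$ a maximum, and unless the cascade baseline itself satisfies $\mathrm{NDCG}_{\text{cascade}}(q)\to 1$, the quality ratio exceeds $1$ and so does the metric. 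Your repair --- decomposing $\mathrm{UPQE}$ into three dimensionless factors, adding the hypothesis that the comparison is against the strongest cascade on the same candidate budget (so the quality factor tends to $1$ rather than merely being at least $1$) and that $C_{\text{cascade}}$ is the common cost floor (so the efficiency factor lies in $(0,1]$) --- is what is actually required for ``maximum value $\gamma$'' to be literally true, and your continuity remark is the only place either argument gives real content to the word ``converges.'' In short: the paper buys brevity at the price of an internal inconsistency (it proves $\mathrm{UPQE}\ge\gamma$ while the theorem claims $\gamma$ is the maximum), whereas your version buys a correct statement at the price of hypotheses the theorem does not explicitly contain; be aware that, as written in the paper, the claim you are asked to prove is not recoverable without those added assumptions.
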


\begin{proof}
Consider a perfect unified model that:
\begin{enumerate}
    \item Achieves optimal NDCG: $\text{NDCG}_{\text{unified}}(q) = 1$
    \item Has no error propagation: $E_{\text{propagation}}(q) = 0$
    \item Has the same computational cost as the cascade model: $C_{\text{unified}} = C_{\text{cascade}}$
\end{enumerate}

Substituting into the UPQE formula:
\begin{equation}
\text{UPQE}(q) = \gamma \cdot \frac{1}{\text{NDCG}_{\text{cascade}}(q)} \cdot (1)^{\alpha} \cdot (1)^{\beta}
\end{equation}

Since $\text{NDCG}_{\text{cascade}}(q) \leq 1$ and all other terms are 1, we have:
\begin{equation}
\text{UPQE}(q) \geq \gamma
\end{equation}

Therefore, as the model approaches perfect ranking with no error propagation and optimal efficiency, UPQE approaches its upper bound $\gamma$.
\end{proof}

\begin{theorem}[Proper Weighting of Error Propagation]
\label{thm:upqe_error_weight}
The UPQE metric properly weights the impact of error propagation based on its severity.
\end{theorem}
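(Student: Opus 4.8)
The plan is to first turn the informal phrase ``properly weights'' into three concrete, provable desiderata, and then verify each by direct analysis of the propagation factor $\big(1 - E_{\text{propagation}}(q)/|R_q|\big)^{\alpha}$. Writing $x = E_{\text{propagation}}(q)/|R_q| \in [0,1]$ and folding every factor of the UPQE formula that does not depend on $x$ into a positive constant $K(q) = \gamma \cdot \big(\text{NDCG}_{\text{unified}}(q)/\text{NDCG}_{\text{cascade}}(q)\big) \cdot \big(C_{\text{cascade}}/C_{\text{unified}}\big)^{\beta} > 0$, we have $\text{UPQE}(q) = K(q)\,(1-x)^{\alpha}$. The three properties I would establish are: (i) \emph{strict monotone degradation} --- with the other factors held fixed, UPQE is strictly decreasing in the number of missed relevant items; (ii) \emph{super-linear marginal penalty} --- for $\alpha \ge 1$ the additional loss caused by missing one more relevant item is non-decreasing in the number already missed, so severe failures are penalized disproportionately; and (iii) \emph{scale sensitivity} --- a fixed absolute miss count costs more on queries with fewer relevant items.

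For (i) and (ii) I would just differentiate: $\partial\,\text{UPQE}(q)/\partial x = -\alpha K(q)(1-x)^{\alpha-1} < 0$ on $[0,1)$ gives strict monotonicity, and $\partial^2\,\text{UPQE}(q)/\partial x^2 = \alpha(\alpha-1)K(q)(1-x)^{\alpha-2} \ge 0$ whenever $\alpha \ge 1$ shows $(1-x)^\alpha$ is convex, hence the magnitude of the marginal penalty is itself non-decreasing in $x$. Passing back to the integer count $E_{\text{propagation}}(q)$, which moves $x$ in steps of $1/|R_q|$, the standard fact that successive finite differences of a convex function are non-decreasing yields the claim that the UPQE drop on the step from $m$ to $m+1$ missed items is at least as large as on the step from $m-1$ to $m$. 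For (iii), fix the absolute miss count $m$, set $n = |R_q|$, and treat $n$ as the variable with the remaining factors fixed; then $\partial\,\text{UPQE}(q)/\partial n = \alpha K(q)(1 - m/n)^{\alpha-1}\cdot m/n^{2} > 0$, so a larger pool of relevant items softens the penalty for the same $m$ --- matching the intuition that missing one of two relevant items should hurt more than missing one of a hundred.

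To upgrade this from a qualitative statement to a quantitative alignment with true ranking loss, I would additionally link $E_{\text{propagation}}(q)$ to actual NDCG degradation via the Error Propagation Bound stated in Section~\ref{sec:limitations}: the ideal-DCG-normalized gain forfeited by failing to retrieve a relevant item placed at its optimal rank is bounded by a decreasing function of that rank, so under a boundedness hypothesis on per-item gains the true quality shortfall is $O\big(E_{\text{propagation}}(q)/|R_q|\big)$ to first order --- precisely the order at which the factor $(1-x)^{\alpha}$ departs from $1$. This shows the penalty term is not an arbitrary multiplier but tracks the leading-order true loss, with $\alpha$ controlling how aggressively the high-severity tail is down-weighted relative to that baseline.

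The main obstacle is not the calculus in (i)--(iii), which is elementary, but the modelling step underlying the word ``proper'': one must commit to a notion of severity (absolute miss count, fraction of $|R_q|$, or lost NDCG) and argue it is the operative one, and the quantitative alignment in the last paragraph is delicate because NDCG loss depends on \emph{which} relevant items are missed --- their optimal ranks --- not merely how many. Handling this cleanly requires either an explicit worst-case/average-case assumption on where the missed items sit in the optimal permutation or a monotone coupling between the miss set and that permutation; stating this hypothesis transparently, so that the bound is honest rather than circular, is the part that needs the most care.
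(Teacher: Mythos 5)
Your overall strategy---formalize ``properly weights'' as a list of analytic properties of the factor $(1-x)^{\alpha}$ with $x = E_{\text{propagation}}(q)/|R_q|$ and verify them by differentiation---is essentially the paper's: the paper checks the endpoint values, computes $\frac{\partial}{\partial E_{\text{propagation}}(q)}\bigl(1-\frac{E_{\text{propagation}}(q)}{|R_q|}\bigr)^{\alpha} = -\frac{\alpha}{|R_q|}\bigl(1-\frac{E_{\text{propagation}}(q)}{|R_q|}\bigr)^{\alpha-1}$, and argues strict monotone decrease together with a progressively harsher marginal penalty. Your properties (i) and (iii) are correct, and (iii) (scale sensitivity in $|R_q|$) is a genuine addition the paper does not prove; your closing remark that an honest version of the theorem must commit to a notion of severity and tie $E_{\text{propagation}}(q)$ to actual NDCG loss is also well taken, and neither your sketch nor the paper fully supplies that link.

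However, your property (ii) contains a genuine sign error, and it is the property that carries the ``severe failures are penalized disproportionately'' content of the theorem. You correctly compute $\frac{\partial^2}{\partial x^2}(1-x)^{\alpha} = \alpha(\alpha-1)(1-x)^{\alpha-2} \ge 0$ for $\alpha \ge 1$, so the factor is convex there---but for a decreasing convex function the derivative is negative and \emph{increasing toward zero}, so its magnitude $\alpha(1-x)^{\alpha-1}$ is \emph{non-increasing} in $x$ when $\alpha \ge 1$. Equivalently, successive finite differences $f(x_{k+1})-f(x_k)$ of a convex $f$ are non-decreasing, hence the \emph{drops} $f(x_k)-f(x_{k+1})$ are non-increasing: the penalty for the $(m+1)$-st missed item is at most, not at least, that for the $m$-th. (Take $\alpha=2$ and $|R_q|=2$: the successive drops are $3/4$ and then $1/4$.) The escalating marginal penalty you want---and that the paper asserts when it claims the magnitude of the derivative increases as more items are missed---holds precisely in the opposite regime $\alpha < 1$, where $(1-x)^{\alpha-1}$ grows as $x \to 1$. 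So you must either restrict to $\alpha \le 1$ and drop the convexity argument, or keep $\alpha \ge 1$ and abandon the ``disproportionately harsher'' reading; as written, your second-derivative computation contradicts the conclusion you draw from it. (The paper leaves $\alpha$ unrestricted and therefore carries the same unstated hypothesis, but it at least asserts the direction consistent with the sub-unit exponent it evidently intends.)
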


\begin{proof}
The error propagation term $(1 - \frac{E_{\text{propagation}}(q)}{|R_q|})^{\alpha}$ behaves as follows:

\begin{enumerate}
    \item When no relevant items are missed ($E_{\text{propagation}}(q) = 0$), this term equals 1, not penalizing the model.
    
    \item When all relevant items are missed ($E_{\text{propagation}}(q) = |R_q|$), this term equals 0, maximally penalizing the model.
    
    \item For intermediate cases, the penalty increases non-linearly with the proportion of missed items, with $\alpha$ controlling the penalty severity.
\end{enumerate}

Taking the derivative with respect to $E_{\text{propagation}}(q)$:
\begin{equation}
\frac{\partial}{\partial E_{\text{propagation}}(q)}\left(1 - \frac{E_{\text{propagation}}(q)}{|R_q|}\right)^{\alpha} = -\frac{\alpha}{|R_q|}\left(1 - \frac{E_{\text{propagation}}(q)}{|R_q|}\right)^{\alpha-1}
\end{equation}

This derivative is always negative, meaning the metric strictly decreases as error propagation increases. Furthermore, the magnitude of the derivative increases as more items are missed, imposing a progressively harsher penalty for additional errors, which aligns with user experience in recommendation systems.
\end{proof}

\begin{theorem}[Balancing Quality and Efficiency]
\label{thm:upqe_quality_efficiency}
UPQE establishes a Pareto-optimal tradeoff between ranking quality and computational efficiency.
\end{theorem}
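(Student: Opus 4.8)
The plan is to recast UPQE as a scalarization of the bi-objective problem of simultaneously maximizing ranking quality and minimizing computational cost, and then invoke the standard equivalence between scalarization maximizers and Pareto-optimal points. Concretely, fix the query $q$ and treat the cascade quantities $\text{NDCG}_{\text{cascade}}(q)$, $C_{\text{cascade}}$, and $|R_q|$ as positive constants; the decision variables are the model's achievable triple $(\text{NDCG}_{\text{unified}}(q), E_{\text{propagation}}(q), C_{\text{unified}})$ ranging over a feasible set $\mathcal{F}$ determined by the model class. Write $Q = \text{NDCG}_{\text{unified}}(q)$, $\rho = 1 - E_{\text{propagation}}(q)/|R_q| \in [0,1]$ (retrieval completeness), and $C = C_{\text{unified}} > 0$; the relevant dominance order on $\mathcal{F}$ is the product order preferring larger $Q$, larger $\rho$, and smaller $C$. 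In these coordinates, $\text{UPQE}(q) = \bigl(\gamma/\text{NDCG}_{\text{cascade}}(q)\bigr) \cdot Q \cdot \rho^{\alpha} \cdot (C_{\text{cascade}}/C)^{\beta}$.

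First I would establish strict monotonicity of UPQE in each coordinate: it is linear (hence strictly increasing) in $Q$; its partial derivative in $\rho$ equals $\alpha \rho^{\alpha-1}$ times positive factors, strictly positive for $\rho > 0$, $\alpha > 0$ (this is precisely the sign computation performed in Theorem \ref{thm:upqe_error_weight}); and its partial derivative in $C$ is $-\beta C^{-\beta-1}$ times positive factors, strictly negative for $\beta > 0$. The soundness direction then follows by contradiction: if $(Q^*, \rho^*, C^*) \in \mathcal{F}$ maximizes UPQE but is Pareto-dominated by some $(Q', \rho', C') \in \mathcal{F}$ with $Q' \ge Q^*$, $\rho' \ge \rho^*$, $C' \le C^*$ and at least one inequality strict, then strict monotonicity in the strictly improved coordinate (together with monotonicity in the others) forces $\text{UPQE}(Q', \rho', C') > \text{UPQE}(Q^*, \rho^*, C^*)$, contradicting optimality. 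Hence every UPQE-maximizer lies on the Pareto frontier of $\mathcal{F}$.

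To argue the tradeoff is genuinely Pareto-optimal rather than merely Pareto-nondominated --- i.e.\ that UPQE sweeps out the entire frontier as its exponents vary --- I would pass to logarithms: $\log \text{UPQE}(q) = \text{const} + \log Q + \alpha \log \rho + \beta \log(C_{\text{cascade}}/C)$ is an affine functional of the transformed objectives $(\log Q, \log \rho, -\log C)$ with strictly positive coefficients $(1, \alpha, \beta)$. Maximizing UPQE is therefore exactly linear scalarization in log-space, and by the supporting-hyperplane characterization, every point on the Pareto frontier of the log-transformed feasible set is attained by some positive weight vector precisely when that frontier bounds a convex set. So under the mild assumption that the achievable log-quality/log-cost region is convex --- which holds whenever quality exhibits diminishing returns in log-cost, as is typical for both cascade and unified rankers --- each Pareto-optimal configuration is the UPQE-optimum for a suitable $(\alpha, \beta)$, yielding a correspondence between $(\alpha,\beta)$-rays and frontier points.

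The main obstacle is exactly this completeness (converse) direction: without structural assumptions on $\mathcal{F}$, a nonconvex frontier can harbor Pareto-optimal points unattainable by any weighted-product scalarization, so stating the theorem as a biconditional forces the log-convexity hypothesis to be made explicit. I would therefore present soundness (every UPQE-maximizer is Pareto-optimal) as the unconditional core of the proof and completeness as a clearly flagged conditional refinement, also noting the degenerate edge cases ($\rho = 0$, or a feasible set possessing a flat boundary segment of constant quality) where the maximizer may fail to be unique yet remains Pareto-optimal.
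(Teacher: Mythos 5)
Your proposal is correct in substance but follows a genuinely different route from the paper's. The paper does not run a scalarization argument at all: it compares two models $M_1$ and $M_2$ with equal error propagation, forms the ratio of their UPQE scores, sets that ratio to one, and derives the iso-UPQE condition $\log(\text{NDCG}_{M_1}/\text{NDCG}_{M_2}) = \beta \cdot \log(C_{M_1}/C_{M_2})$; its ``proof'' is really a computation of the indifference curves, identifying $\beta$ as the exchange rate between log-quality and log-cost, and it never formally defines the feasible set, the dominance order, or what Pareto optimality means here. Your argument --- strict monotonicity of UPQE in each of $(Q, \rho, C)$, the standard contradiction showing that every UPQE-maximizer is Pareto-nondominated, and the supporting-hyperplane discussion of when the weighted-product scalarization sweeps the entire frontier --- actually proves a Pareto-optimality statement, which the paper's computation does not. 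Your log-space formulation also subsumes the paper's calculation: equating $\log \text{UPQE}$ for two models with the same $\rho$ recovers exactly their displayed identity, so nothing in their proof is lost. The caveats you flag are real and unaddressed in the paper: the degenerate maximizers at $\rho = 0$ or $Q = 0$ (where UPQE vanishes and strict monotonicity in the remaining coordinates fails, so one must assume the optimal UPQE is positive), and the log-convexity hypothesis needed for the completeness direction, without which a nonconvex frontier contains Pareto points unreachable by any $(\alpha,\beta)$. Presenting soundness unconditionally and completeness as a flagged conditional refinement is the more defensible reading of what the theorem can actually claim.
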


\begin{proof}
Consider two models $M_1$ and $M_2$ with the following properties:
\begin{itemize}
    \item $M_1$ has higher quality but lower efficiency
    \item $M_2$ has lower quality but higher efficiency
\end{itemize}

Let's compare their UPQE scores, assuming the same error propagation for simplicity:
\begin{equation}
\frac{\text{UPQE}(M_1)}{\text{UPQE}(M_2)} = \frac{\text{NDCG}_{M_1}}{\text{NDCG}_{M_2}} \cdot \left(\frac{C_{M_2}}{C_{M_1}}\right)^{\beta}
\end{equation}

For these models to have the same UPQE score, we need:
\begin{equation}
\frac{\text{NDCG}_{M_1}}{\text{NDCG}_{M_2}} = \left(\frac{C_{M_1}}{C_{M_2}}\right)^{\beta}
\end{equation}

Taking logarithms of both sides:
\begin{equation}
\log\frac{\text{NDCG}_{M_1}}{\text{NDCG}_{M_2}} = \beta \cdot \log\frac{C_{M_1}}{C_{M_2}}
\end{equation}

This establishes a precise mathematical relationship between quality improvement and computational cost increase, with $\beta$ controlling the tradeoff.
\end{proof}

\subsection{Hyperparameter Selection for UPQE}

The hyperparameters in UPQE should be tuned based on the specific application requirements:

\begin{itemize}
    \item $\alpha$: Controls the importance of error propagation. Higher values place more emphasis on reducing missed relevant items.
    
    \item $\beta$: Controls the importance of computational efficiency. Higher values favor more efficient models.
    
    \item $\gamma$: Scaling factor to adjust the range of the metric. Typically set to 1.
\end{itemize}

For systems with strict latency requirements, increasing $\beta$ would prioritize efficiency. For applications where recall is critical, increasing $\alpha$ would place more emphasis on minimizing error propagation.

\subsection{Advantages for Evaluating LT-TTD}

The UPQE metric is particularly suited for evaluating the LT-TTD model because:

\begin{enumerate}
    \item It directly measures the reduction in error propagation, which is a key advantage of the unified architecture.
    
    \item It accounts for the bidirectional knowledge distillation by rewarding models that achieve better retrieval and ranking simultaneously.
    
    \item It considers computational efficiency, which is crucial for practical deployment in large-scale systems.
    
    \item It provides a single value that holistically evaluates the model, making it easier to compare different architectural choices.
\end{enumerate}

This metric provides a theoretically sound way to evaluate unified ranking models like LT-TTD, considering their unique characteristics and the tradeoffs they aim to optimize.

\section{Conclusion}
\label{sec:conclusion}

This paper presented LT-TTD (Listwise Transformer with Two-Tower Distillation), a unified architecture for large-scale ranking systems that addresses the fundamental limitations of traditional cascade architectures. By combining the computational efficiency of two-tower models with the expressive power of transformers in a listwise learning framework, LT-TTD achieves superior ranking performance while maintaining practical efficiency.

We provided a comprehensive theoretical analysis of LT-TTD, establishing formal guarantees on error propagation reduction, ranking quality improvement, optimization convergence, and computational complexity. These theoretical guarantees demonstrate that LT-TTD not only addresses the limitations of traditional cascade systems but also provides provably better performance.

Additionally, we introduced the UPQE metric, specifically designed to evaluate unified ranking architectures by considering retrieval quality, ranking performance, and error propagation simultaneously. We established theoretical properties of this metric, demonstrating its suitability for holistically evaluating models like LT-TTD.

The key contributions of our work include:

\begin{enumerate}
    \item A unified architecture that bridges retrieval and ranking phases, mitigating error propagation and resolving conflicting objectives.
    
    \item A bidirectional knowledge distillation mechanism that aligns the behaviors of the two components, with theoretical guarantees on the distillation error bound.
    
    \item A multi-objective optimization framework that jointly learns retrieval and ranking objectives, with a formal proof of its superiority over disjoint optimization.
    
    \item Theoretical bounds on the computational complexity of our approach, demonstrating its practicality for large-scale applications.
    
    \item A novel evaluation metric that holistically captures the unique characteristics of unified ranking models.
\end{enumerate}

Looking ahead, we believe that unified architectures like LT-TTD represent a promising direction for the future of ranking systems, offering a more principled and effective approach to the challenging problem of efficient large-scale ranking. Future work could explore extensions to cross-modal retrieval, dynamic capacity allocation, and hierarchical listwise processing to further improve the balance between efficiency and effectiveness.

\bibliography{references}

\end{document}